\newtheorem{theorem}{Theorem}
\definecolor{caribbeangreen}{rgb}{0.0, 0.8, 0.6}
\theoremstyle{definition}
\title{Haar averaged moments of correlation functions and OTOCs in Floquet systems}
\author{Ewan McCulloch\footnote{email: ewan.r.mcculloch@gmail.com}}
\affil{\textit{University of Birmingham, Birmingham B15 2TT, UK}}
\date{\vspace{-5ex}}
\let\newtitle\@title
\let\newauthor\@author
\let\newdate\@date
\begin{document}

\maketitle

\begin{abstract}
Scrambling and thermalisation are topics of intense study in both condensed matter and high energy physics. Random unitary dynamics form a simple testing-ground for our theoretical understanding of these processes. In this work, we derive exact expressions for the large $q$ limiting behaviour of a selection of $n$-point correlation functions, out-of-time-ordered correlators (OTOCs), and their moments. In the process we find a general principle that breaks OTOCs into small and easy to calculate pieces, and which can likely be deployed in a more general context.
\end{abstract}

\section{Introduction}
The thermalisation of quantum chaotic systems has brought into focus the concept of scrambling, where much of the locally accessible information in the initial state is sequestered into more complicated, and increasingly difficult to measure observables. Scrambling has been the topic of huge interest in the fields of holography and black holes \cite{Sekino08, Lashkari2013, Shenker2014a, Shenker2014b, Shenker2015, Maldacena2016, Hartman2013, Liu14a, Liu14b, Mezei16, Blake16}, quantum field theories \cite{Stanford2016, Asplund15, Banerjee2017, Roberts18, Roberts16, Swingle17, Aleiner16, CalabreseCardy05} and random unitary circuits \cite{Nahum16, Nahum17, RvK17, OTOCDiff1, OTOCDiff2, Brown12, ChanDeLuca1}, to name a few. 

Random unitary matrices have found many uses in the study of scrambling (and thermalisation more generally), having a crucial role in random unitary circuits models and also in a model of scrambling in black holes \cite{Hayden07}. Evolution in strongly ergodic systems looks approximately Haar random at intermediary times (the ‘dip’ in the spectral form factor\cite{Cotler2017a,Gharibyan2018}) before the onset of random matrix theory behaviour and the famous ‘ramp’ \cite{Cotler2017a,Meh2004,Brezin1997,Gharibyan2018}. It is no surprise then, that Haar random matrices are often used as a benchmark when comparing the scrambling dynamics of other models.

Previous work by \cite{RobertsDesign} investigates OTO correlators as a scrambling diagnostic and calculate the 4, 6 and 8-point OTO correlators (of the form $\langle A_1 B_1(1) \cdots A_k B_k(1)\rangle$ where $B(1)=UBU^\dagger$) for the Haar unitary ensemble. In this paper, we investigate the $n$-point correlators of the form $\langle Z(t_1)\cdots Z(t_n)\rangle$ where $Z\in \mathbb{C}^{q\times q}$ is a normalised, traceless operator and the discrete time evolution is generated by a Haar random unitary $U$. In section \ref{higher moments} we find the scaling behaviour for the Haar average of products of $p$ correlators to be $\order{q^{-2\lfloor p/2 \rfloor}}$ as $q\to\infty$. In section \ref{two-correlators} we find that for $p=2$, this scaling bound is met only when the correlators are complex conjugates of one another (Eq. \ref{correlators}). Additionally, the proportionality constant is a symmetry factor that counts the number of cyclic symmetries the correlators have. Finally in section \ref{OTOC section}, we investigate a class of OTOCs using a diagrammatic scheme with which we can systematically count all leading order contributions. The re-summation of these diagrams reveals a simple structure, in which the OTOC is divided into easily evaluated pieces that group together operators with a shared time evolution (Eq. \ref{OTOCs}).

These results are useful outside of the setting of a quantum dot; in an upcoming pre-print \cite{MMFpaper}, we study the hydrodynamic transport of quantum information in a one-dimensional Floquet model with on-site Haar random scrambling. OTOCs and other $n$-point functions appear naturally in this calculation; the results in this paper enable us to find $\order{1/q^2}$ corrections to the circuit averaged butterfly velocity. Using the methodology developed in section \ref{OTOC section}, one might be able to generalise the results of this paper to $n$-point functions of the more general form $\langle Z_1(t_1)\cdots Z_n(t_n)\rangle$, where each $Z_i\in \mathbb{C}^{q\times q}$ can be any normalised, traceless operator. This would open the door to circuit averaging in a much large number of Floquet circuits.

\section{Correlators and contours}
In this manuscript, an $n$-point correlation function is given by 
\begin{equation}
    \langle \mathcal{Z}(\boldsymbol{t})\rangle = \langle Z(t_1) \cdots Z(t_n)\rangle = \frac{1}{q}\Tr[ Z(t_1) \cdots Z(t_n) ],
\end{equation}
where $\boldsymbol{t}=\left(t_1,\cdots,t_N\right)$ and $\mathcal{Z}(\boldsymbol{t})=Z(t_1)\cdots Z(t_n)$ is a product of $n$ `scrambled' Pauli $Z$ matrices, $Z(t)=U^t Z U^{-t}$ (with a unitary $U$ drawn from the Haar distribution on the group $\mathcal{U}(q)$ of $q\times q$ unitary matrices), and where no consecutive times are equal (including the first and final times, which are consecutive due to the cyclic property of the trace). Two correlators with times $\boldsymbol{t}$ and $\boldsymbol{t}'$ are identical if the two sequences $\boldsymbol{t}$ and $\boldsymbol{t}'$ equal up to a cyclic permutation. Therefore, without loss of generality, we assume that $t_1\leq t_i$ for all $i$.

A correlator $\langle \mathcal{Z}(\boldsymbol{t})\rangle$ is called contour-ordered for a contour $\mathcal{C}$ if the sequence of times $\boldsymbol{t}=(t_1,\cdots,t_n)$, is contour ordered on $\mathcal{C}$. An example is given below for contours with only a single forward and backward segment and for contours with two forward and backward segments. The later is the type of contour that out-of-time-ordered (OTO) correlators live on, and hence is referred to as a contour of OTO type. The former is referred to as a contour of time-ordered (TO) type.

\begin{figure}[H]
    \centering
    \raisebox{-0.45\totalheight}{\includegraphics[height = 2.1cm]{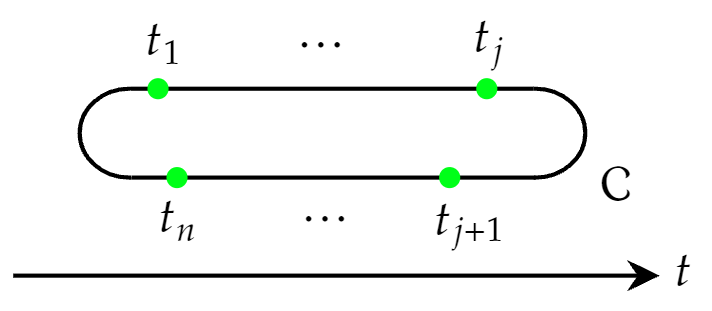}}
    \hspace{5mm}
    \raisebox{-0.3\totalheight}{\includegraphics[height = 3.1cm]{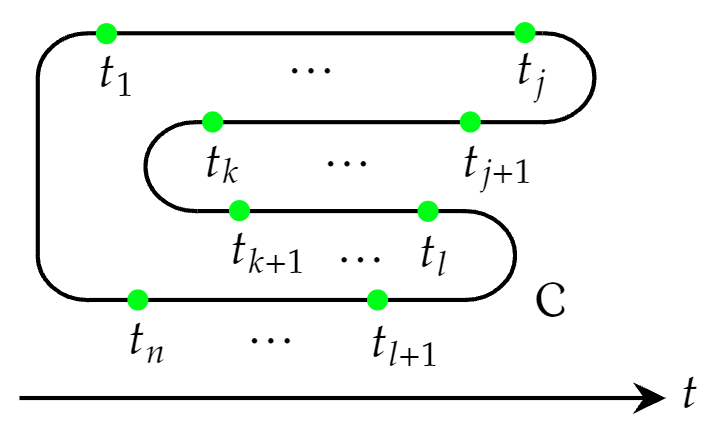}}
    \caption{Contour ordering for TO type contours and OTO type contours.}
    \label{contours}
\end{figure}

Correlators that are contour ordered for a TO (OTO) type contour are referred to as TO (OTO) correlators. In the following, we assume that the number of instances $\mathcal{N}_U$ of the Haar random unitary $U$ appearing in an expression to be averaged does not scale with the dimension $q$ and find large $q$ asymptotics. For TO correlators, this simply means that $(\max(t_i)-\min(t_i))$ is finite and fixed while we probe large $q$. For OTO correlators, the number of instance of $U$ is bounded above by $2(\max(t_i)-\min(t_i))$.

\section[\texorpdfstring{tex}{pdfbookmark}]{$n$-point correlatation functions}\label{correlator}
In this section we investigate the large $q$ scaling behaviour (of the Haar average) of arbitrary-time-ordered correlators $\langle \mathcal{Z}(\boldsymbol{t})\rangle$, where the time-ordering is arbitrary as we make no assumptions about the sequence $\boldsymbol{t}$ beyond the condition that no consecutive times are equal.
\begin{theorem}
	The Haar averaged of an $n$-point correlation function $\langle \mathcal{Z}(\boldsymbol{t})\rangle$ is $\order{1/q^2}$,
	\begin{equation}
	\int dU \langle \mathcal{Z}(\boldsymbol{t}) \rangle = \order{1/q^2} \quad \textrm{as $q\to\infty$}.    
	\end{equation}
\end{theorem}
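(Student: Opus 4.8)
The plan is to reduce the Haar integral to a finite Weingarten sum and to show that every term in it is either identically zero or $\order{1/q^2}$. First I would telescope and expand: since $Z(t_i)=U^{t_i}ZU^{-t_i}$ and $U^{-t_i}U^{t_{i+1}}=U^{t_{i+1}-t_i}$, cyclicity of the trace gives $\langle\mathcal Z(\boldsymbol t)\rangle=\tfrac1q\Tr[ZU^{s_1}ZU^{s_2}\cdots ZU^{s_n}]$ with $s_i=t_{i+1}-t_i$ (indices cyclic), so $\sum_i s_i=0$ and each $s_i\neq0$. Writing $U^{s_i}$ as $\abs{s_i}$ copies of $U$ (if $s_i>0$) or of $U^\dagger$ (if $s_i<0$), the integrand is a balanced monomial with $m:=\sum_{s_i>0}s_i=\sum_{s_i<0}\abs{s_i}$ factors of $U$ and of $\bar U$, and $m$ stays fixed as $q\to\infty$ by the standing assumption on $\mathcal N_U$. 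The Weingarten formula then yields $\int dU\,\langle\mathcal Z(\boldsymbol t)\rangle=\sum_{\alpha,\beta\in S_m}\mathrm{Wg}(\alpha\beta^{-1},q)\,V(\alpha,\beta)$, where $V(\alpha,\beta)$ comes from arranging the $n$ factors $Z$ and the $2m$ factors $U,\bar U$ on a circle in the cyclic order set by $\boldsymbol s$ and contracting indices according to $(\alpha,\beta)$; concretely $V(\alpha,\beta)=q^{-1}\prod_{\text{loops }c}\Tr[Z^{\ell_c}]$, with $\ell_c\ge0$ the number of $Z$'s on loop $c$ (transposes on some factors being harmless) and $\sum_c\ell_c=n$.

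Two inputs then control the size of each term. Because $\Tr Z=0$, one has $V(\alpha,\beta)=0$ as soon as some loop carries exactly one $Z$; otherwise, since $\norm{Z}=\order{1}$ and $n$ is fixed, $\abs{V(\alpha,\beta)}=\order{q^{L-1}}$ where $L=L(\alpha,\beta)$ is the number of loops. And $\abs{\mathrm{Wg}(\pi,q)}=\order{q^{-(2m-c(\pi))}}$, with $c(\pi)$ the number of cycles of $\pi$, so every non-vanishing term is $\order{q^{\,L-1-2m+c(\alpha\beta^{-1})}}$. Finally I would invoke the standard Euler-characteristic bound for this single-trace diagram, $L\le 2m-c(\alpha\beta^{-1})+1$, with the defect $2m-c(\alpha\beta^{-1})+1-L$ always even (twice the genus of the glued surface). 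Configurations of defect $\ge 2$ therefore already contribute $\order{1/q^2}$.

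This leaves the defect-zero ("planar") configurations, and the crux is to show each of these vanishes by the first input. A planar configuration has $\alpha=\beta=\sigma$ with $\sigma$ a pairing of the $2m$ beads $U,\bar U$ that is non-crossing with respect to their cyclic order on the circle. Every non-crossing pairing has a nearest-neighbour arc: two beads consecutive among the $U/\bar U$ beads that are matched to each other. Being matched, they have opposite type, hence they are the last bead of some block $U^{s_i}$ and the first bead of the adjacent (opposite-sign) block $U^{s_{i+1}}$, and exactly one $Z$ — namely $Z^{(i+1)}$ — lies between them. Contracting this arc identifies the two indices of $Z^{(i+1)}$, and those indices occur nowhere else, so the loop through $Z^{(i+1)}$ carries exactly that one $Z$ and $V=0$. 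Thus every term of the finite Weingarten sum is $0$ or $\order{1/q^2}$, which proves the theorem.

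I expect the last step to be the main obstacle: the genus bookkeeping has to be made precise — in particular that defect-zero forces $\alpha=\beta$ together with a non-crossing bead-pairing for this specific necklace — and one must check that a nearest-neighbour arc always pinches exactly one $Z$, never zero (impossible because adjacent blocks are always separated by a $Z$) and never more (the two beads are consecutive among the $U/\bar U$ beads). Everything else is routine manipulation of the Weingarten asymptotics and the graphical calculus.
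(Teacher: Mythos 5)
Your argument is correct in outline, but it takes a genuinely different route from the paper. The paper never expands the powers $U^{s_i}$ into individual letters: it uses left/right invariance to insert an auxiliary unitary $V$ around each block $U^{x_i}$, applies the Weingarten formula to $V$ (so the permutation sum runs over $S_n$ with $n$ the number of $Z$ insertions, independent of the time separations), evaluates the residual $U$-average of products $\prod_m\Tr(U^m)^{a_m}\Tr(U^{-m})^{b_m}$ exactly via the Diaconis--Shahshahani moment formula, and then kills the would-be $\order{1/q}$ terms by a parity obstruction: saturating the loop counts forces $\sigma,\tau$ to be fixed-point-free involutions while minimising $\abs{\tau\sigma^{-1}\pi}$ forces $\sigma\tau^{-1}=\pi$, impossible since the $n$-cycle $\pi$ is odd for even $n$. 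You instead run the textbook asymptotic-freeness computation: Weingarten over $S_m$ with $m=\mathcal{N}_U$, the genus expansion with even defect, and the observation that every planar (defect-zero) configuration is a non-crossing $U$--$\bar U$ matching whose innermost arc straddles a block boundary and therefore pinches exactly one traceless $Z$. Your nearest-neighbour-arc argument is sound (adjacent beads of opposite type occur only at sign changes of the $s_i$, where exactly one $Z$ sits), and it is in fact the direct analogue of the ``cascade'' argument the paper deploys later, in the proof of Theorem 3, rather than of anything in the proof of Theorem 1. Interestingly, both proofs hinge on an evenness/parity fact at the decisive step --- the paper uses the sign homomorphism on $S_n$, you use orientability of the glued surface (even genus defect). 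What each buys: the paper's route keeps the permutation group small and fixed ($S_n$ rather than $S_{\mathcal{N}_U}$) and sidesteps the genus machinery at the price of the auxiliary-unitary trick and the $\int dU\prod\Tr(U^m)^{a_m}\Tr(U^{-m})^{b_m}$ input; your route needs no auxiliary unitary and makes the absence of an $\order{1/q}$ term structurally transparent. The one real gap --- which you correctly flag yourself --- is the genus bookkeeping: the bound $L\le 2m-c(\alpha\beta^{-1})+1$, the evenness of the defect, and especially the claim that defect zero forces $\alpha=\beta$ with a non-crossing bead matching all require proof (the last does not follow from the stated inequality alone, since for $\alpha\neq\beta$ that bound is not tight; one needs the sharper geodesic-condition form of the loop count, e.g.\ \`a la Collins--\'Sniady). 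With that lemma supplied, your proof is complete and, as a bonus, works for any traceless $Z$ of bounded operator norm without using $Z^2=\mathbb{1}$, which the paper's evaluation of $H(\tau)$ does use.
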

\begin{proof}
Label the differences $x_i = t_{i+1} - t_i \neq 0$ with $x_1=t_1-t_n$, such that $\sum_i x_i = 0$. The number of instances of $U$ appearing in this correlators is given by $\mathcal{N}_U=\sum_{i}\abs{x_i}/2$.
This allows us to write the $n$-point function as
\begin{equation}
\langle Z(t_1) \cdots Z(t_N) \rangle = \langle ZU^{x_1}ZU^{x_2} \cdots ZU^{x_n} \rangle
\end{equation}
Taking a Haar average over the unitary $U$ and making use of the left and right invariance of the Haar measure $\int dU f(U) = \int d(UV) f(UV) = \int dU f(UV)$ (similarly for left invariance) and by representing the moments of the Haar measure as a sum over permutations weighted by the Weingarten functions \cite{Collins2002MomentsAC,Collins2006,asymptoticweingarten}, we have
\begin{align*}
\int dU \langle \mathcal{Z}(\boldsymbol{t})\rangle &= \iint dU dV \langle Z V U^{x_1}V^{\dagger}ZVU^{x_2} \cdots V U^{x_n}V^{\dagger} \rangle \\
&= \int dU \sum_{\sigma,\tau\in S_n}\frac{\text{Wg}(q,\tau\sigma^{-1})}{q} \times \quad \raisebox{-0.45\totalheight}{\includegraphics[height=2.1cm]{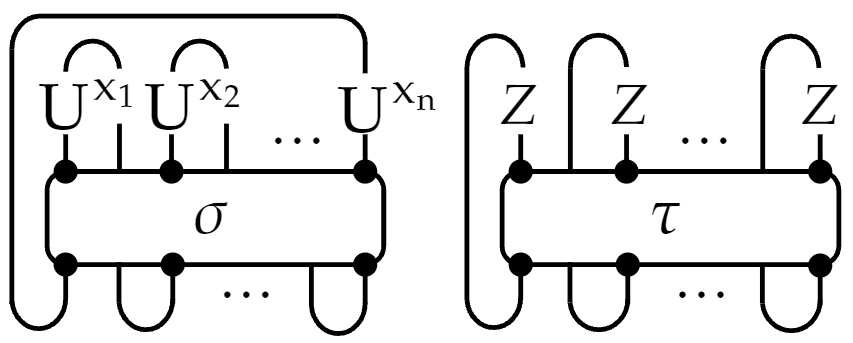}}
\end{align*}
Where we have used the Weingarten formula for the unitary group of dimension $q$ \cite{Collins2002MomentsAC,Collins2006} to evaluate the Haar integral in $V$, $\text{Wg}(q,\sigma)$ is the Weingarten function \cite{asymptoticweingarten}. We have used the following convention for the legs of tensors,
\begin{equation}
	\centering
	\includegraphics[height = 2cm]{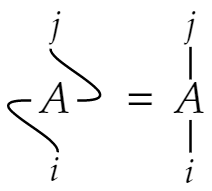}.
\end{equation}
Define the permutation $\pi=(n,1,2,\cdots,n-1)$ (in cycle notation). Diagrammatically, with the lower legs as the incoming legs, this is given by
\begin{equation}
	\pi=\vcenter{\hbox{\includegraphics[height = 1cm]{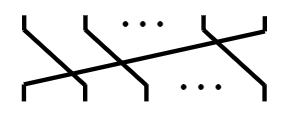}}}
\end{equation}
By inserting the identity permutation as $\mathbb{1}=\pi \pi^{-1}$, we can simplify the contraction of the $U^{x_i}$ using the following,
\begin{equation}
	\vcenter{\hbox{\includegraphics[height = 2.5cm]{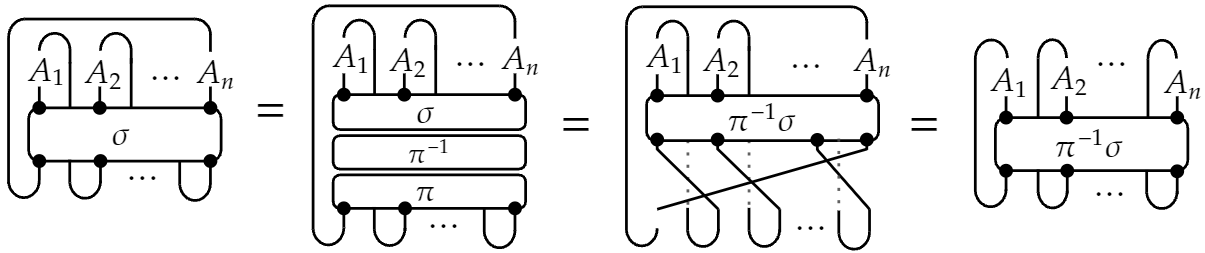}}}.
\end{equation}
Allowing the change of variables in the sum $\sigma \to \pi \sigma$ gives
\begin{equation}
\int dU \langle \mathcal{Z}(\boldsymbol{t})\rangle = \sum_{\sigma,\tau\in S_n}\frac{\text{Wg}(q,\tau\sigma^{-1}\pi)}{q} G(\boldsymbol{x},\sigma)H(\tau),
\end{equation}
where $H(\tau)$ is given by the diagram
\begin{equation}\label{Hdef}
	\vcenter{\hbox{\includegraphics[height = 2.1cm]{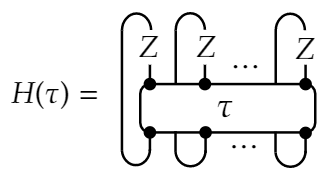}}}.
\end{equation}
The evaluation of $H(\tau)$ is simple,
\begin{equation}
H(\tau)= 
\begin{cases}
q^{\abs{C_\tau}},& \text{if } \tau \text{ has cycles only of even length}\\
0,              & \text{otherwise}.
\end{cases}
\end{equation}
An immediate consequence of this is that for odd $n$, $H(\tau)=0$ for all $\tau$. From now on we consider only even $n$. let $\mathcal{H}\subset S_n$ refer to the set of permutations with cycles of even length only, we can restrict the sum over $\tau\in S_n$ to $\tau\in \mathcal{H}$ in the following. 
The function $G(\boldsymbol{x},\sigma)$ is given by Haar averaged diagram below
\begin{equation}
	G(\boldsymbol{x},\sigma) = \int dU \vcenter{\hbox{\includegraphics[height = 2.1cm]{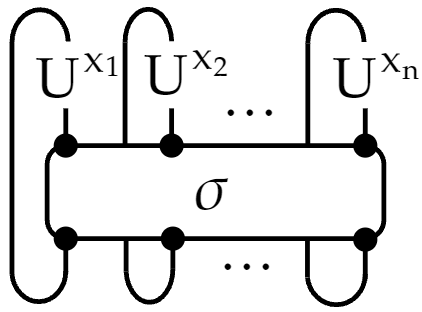}}}
\end{equation}
For a given permutation $\sigma$ and time differences $\boldsymbol{x}$, $G(\boldsymbol{x},\sigma)$ is given by the Haar average of a product of loops, each carrying a power of $U$.
\begin{equation}
G(\boldsymbol{x},\sigma) = q^{N_\text{ud}}\int dU \prod^l_{m=1}\Tr(U^m)^{a_m}\Tr(U^{-m})^{b_m}
\end{equation}
where the $a_m$ and $b_m$ count the multiplicity of the decorated loop corresponding with $\Tr(U^m)$ and $\Tr(U^{-m})$ respectively and depend on $\boldsymbol{x}$ and the permutation $\sigma$. $l$ is the maximum power of $U$ or $U^{-1}$ decorating a loop in the contraction, whichever is larger. $N_\text{ud}$ is the number of undecorated loops (loops that carry no factors of $U$ or $U^{-1}$) and also depends on $\sigma$ and $\boldsymbol{x}$. To evaluate this Haar average we use a result of \cite{Diaconis_2001}.

For $q \geq \max(\sum_{m=1}^l ma_m, \sum_{m=1}^l mb_m)$
\begin{equation}
\int dU \prod^l_{m=1}\Tr(U^m)^{a_m}\Tr(U^{-m})^{b_m} = \delta^{\boldsymbol{a},\boldsymbol{b}}\prod^l_{m=1} m^{a_m}a_m!\leq \delta^{\boldsymbol{a},\boldsymbol{b}} N!, \textrm{ where } N = \sum_{m=1}^l ma_m \leq \mathcal{N}_U,
\end{equation}
where $\boldsymbol{a}=(a_1,\cdots,a_l)$ and $\boldsymbol{b}=(b_1,\cdots,b_l)$. This result is useful whenever one of the tensor diagrams above has at least two decorated loops (a single decorated loop is impossible as $\sum_i x_i = 0$). Importantly, the result does not scale with $q$ as $q\to\infty$, which allows us to write
\begin{equation}
G(\boldsymbol{x},\sigma) \leq C(\boldsymbol{x}) q^{N_\text{ud}}, \quad \text{for some } C(\boldsymbol{x}) > 0 \textrm{ independent of $q$},
\end{equation}
where $N_{\textrm{ud}}\leq n/2$, with equality only achieved when $\sigma$ is composed of $n/2$ disjoint transpositions, producing $n/2$ loops on which the $U^{x_i}$ pair up in such a way that every $U^{x_i}$ pairs with $U^{-x_i}$. This pairing is only possible for select $\boldsymbol{x}$. Call $\mathcal{G}\in S_n$ the set of permutations that produce only undecorated loops. The value of such a loop contraction is given by $q^{\abs{C_\sigma}}$, where $\abs{C_\sigma}$ is the number of cycles in $\sigma$. The Haar average is trivial, giving
\begin{equation}
    G(\boldsymbol{x},\sigma) = q^{\abs{C_\sigma}}, \textrm{ for $\sigma \in \mathcal{G}$.}
\end{equation}
The Haar averaged $n$-point function is now given by
\begin{align}\label{pre approx}
\begin{split}
\int dU \langle \mathcal{Z}(\boldsymbol{t})\rangle =& \sum_{\sigma\in\mathcal{G},\tau\in \mathcal{H}}\frac{\text{Wg}(q,\tau\sigma^{-1}\pi)}{q}q^{\abs{C_\sigma} + \abs{C_\tau}}\\
&+ \sum_{\sigma\in\overline{\mathcal{G}},\tau\in \mathcal{H}}\frac{\text{Wg}(q,\tau\sigma^{-1}\pi)}{q}q^{\abs{C_\tau}}G(\boldsymbol{x},\sigma)
\end{split}
\end{align}
where $\overline{\mathcal{G}}$ is the set of all the elements in $S_n$ not in $\mathcal{G}$. We use the large $q$ asymptotic form of the Weingarten function \cite{asymptoticweingarten,Collins2006},
\begin{equation}\label{asymptotic_weingarten}
\text{Wg}(q,\sigma) = \frac{1}{q^{n+\abs{\sigma}}}\prod_{c\in C_\sigma} (-1)^{\abs{c}-1}\text{Cat}_{\abs{c}-1} + \order{\frac{1}{q^{n+\abs{\sigma}+2}}}
\end{equation}
where $\abs{\sigma}$ is the minimum number of transposition that $\sigma$ is a product of, $C_\sigma$ is the set of cycles in $\sigma$ and $\abs{c}$ is the length of a cycle $c\in C_\sigma$. $\text{Cat}_i$ are the Catalan numbers. The terms in the second sum in Eq. \ref{pre approx} are all of the size $\order{q^{\abs{C_\tau}+N_\text{ud}-n-\abs{\tau\sigma^{-1}\pi} - 1}}$. Using $\abs{C_\tau} \leq n/2$ and $N_\text{ud}\leq (n-2)/2$, it is not hard to see that all these terms are at most $\order{1/q^2}$. Moreover, the number of these terms scales with $n$ and not $q$. This gives,
\begin{equation}\label{post approx}
	\int dU \langle \mathcal{Z}(\boldsymbol{t})\rangle = \sum_{\sigma\in\mathcal{G},\tau\in \mathcal{H}}a(\tau\sigma^{-1}\pi)q^{r(\sigma,\tau) - 1} + \order{1/q^2}
\end{equation}
where $a(\tau\sigma^{-1}\pi) = \prod_{c\in C_\sigma} (-1)^{\abs{c}-1}\text{Cat}_{\abs{c}-1}$ is independent of $q$ and $r(\sigma,\tau) = \abs{C_\tau}+\abs{C_\sigma}-n-\abs{\tau\sigma^{-1}\pi}$. Because $\sum_i x_i = 0$, each $U^{x_i}$ must be accompanied by at least one other $U^{x_j}$ in order to form an undecorated loop. Therefore the maximum number of cycles in $\sigma$ is $n/2$, when all $U^{x_i}$ are paired. This gives $\abs{C_\tau}+\abs{C_\sigma}-n \leq 0$.

Assuming $\abs{C_\tau}+\abs{C_\sigma}$ is saturated then both $\tau$ and $\sigma$ are composed of $n/2$ disjoint transpositions. Therefore they must have the same parity, $P(\sigma)=P(\tau)$. Assume also that $\abs{\tau\sigma^{-1}\pi}$ is minimised, $\abs{\tau\sigma^{-1}\pi}=0$. This implies that the $\sigma\tau^{-1}=\pi$. Applying the parity operator, we find $P(\sigma\tau^{-1})=P(\sigma)P(\tau)=P(\pi)$. Using $P(\sigma)=P(\tau)$ we find $P(\pi)=1$. However, $\pi$ is composed of a product of $n-1$ adjacent transpositions, for even $n$ this gives $P(\pi)=-1$. This is a contradiction, and implies that $\abs{C_\tau}+\abs{C_\sigma}$ and $\abs{\tau\sigma^{-1}\pi}$ cannot be simultaneously maximised and minimised respectively. Therefore $r(\sigma,\tau)\leq -1$. Every term in the \ref{post approx} is of size $\order{q^{-2}}$ or smaller,
\begin{equation}
\int dU \langle \mathcal{Z}(\boldsymbol{t})\rangle = \order{1/q^2}\quad \textrm{as } q\to\infty
\end{equation}
\end{proof}
\section{Higher moments}\label{higher moments}
In this section we generalise the analysis in section \ref{correlator} and determine bounds on the scaling behaviour (of Haar average) of the products of $p$ correlators.
\begin{theorem}\label{productofcorrelators}
	The Haar average of a product of $p$ correlators $\langle \mathcal{Z}(\boldsymbol{t}^1)\rangle \cdots \langle \mathcal{Z}(\boldsymbol{t}^p)\rangle $, with differing consecutive times $t^{(i)}_j\neq t^{(i)}_{j-1}, \ t^{(i)}_j \in \mathbb{Z}$, has the scaling behaviour
	\begin{equation}
	    \int dU \langle \mathcal{Z}(\boldsymbol{t}^1)\rangle\cdots \langle\mathcal{Z}(\boldsymbol{t}^p)\rangle = \order{1/q^{2\lfloor p/2 \rfloor}} \quad \textrm{as $q\to\infty$}.
	\end{equation}
\end{theorem}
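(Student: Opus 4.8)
The plan is to rerun the argument of section~\ref{correlator} almost verbatim, the only genuinely new ingredients being the $p$ normalisation factors and the block structure coming from having $p$ separate traces. First I would write each correlator as $\langle\mathcal{Z}(\boldsymbol{t}^k)\rangle = \frac{1}{q}\Tr[ ZU^{x^{(k)}_1}ZU^{x^{(k)}_2}\cdots ZU^{x^{(k)}_{n_k}}]$ with $x^{(k)}_j = t^{(k)}_{j+1}-t^{(k)}_j\neq 0$, so that the crucial constraint $\sum_j x^{(k)}_j = 0$ holds \emph{separately for each $k$}. Writing $n=\sum_k n_k$, the product of the $p$ correlators carries an overall $1/q^{p}$, contains $n$ copies of $Z$, and involves $\mathcal{N}_U=\tfrac12\sum_{k,j}\abs{x^{(k)}_j}$ instances of $U$, which does not grow with $q$. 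If $n$ is odd the $Z$-diagram below vanishes for every $\tau\in S_n$, so the average is exactly $0$; from now on assume $n$ even.

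Next I would apply the same change of variables $U\to VUV^\dagger$ (a single auxiliary $V$, since there is a single $U$), valid because left- and right-invariance of the Haar measure make $U$ and $VUV^\dagger$ identically distributed for any fixed $V$. This produces $n$ factors of $V$ and $n$ of $V^\dagger$; integrating these out with the Weingarten formula gives a sum over $\sigma,\tau\in S_n$, and inserting $\mathbb{1}=\pi\pi^{-1}$ as before---now with $\pi=\pi_1\cdots\pi_p$ a product of $p$ disjoint cycles of lengths $n_1,\dots,n_p$ encoding the $p$ cyclic traces---yields
\begin{equation}
\int dU\,\langle \mathcal{Z}(\boldsymbol{t}^1)\rangle\cdots\langle\mathcal{Z}(\boldsymbol{t}^p)\rangle = \sum_{\sigma,\tau\in S_n}\frac{\text{Wg}(q,\tau\sigma^{-1}\pi)}{q^{p}}\,G(\boldsymbol{x},\sigma)\,H(\tau),
\end{equation}
with $G(\boldsymbol{x},\sigma)$ the Haar-averaged loop diagram carrying the powers of $U$---the identical object of section~\ref{correlator}, controlled in the identical way by the moment formula of \cite{Diaconis_2001}---and $H(\tau)$ the $Z$-loop diagram, in which a loop of length $m$ evaluates to $\Tr(Z^m)$, i.e. $q$ for $m$ even and $0$ for $m$ odd, so that $H(\tau)=q^{\abs{C_\tau}}$ for $\tau$ with only even cycles and $0$ otherwise. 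The one point needing care is that a $Z$-loop may now thread several of the $p$ traces, but since all the $Z$'s are the same matrix this does not affect the evaluation. Restrict $\tau$ to $\mathcal{H}$.

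I would then substitute the asymptotic Weingarten formula, Eq.~\ref{asymptotic_weingarten}. The potentially leading terms have $\sigma\in\mathcal{G}$ (all $U$-loops undecorated, $G=q^{\abs{C_\sigma}}$) and $\tau\in\mathcal{H}$, and are of size $\order{q^{\,r(\sigma,\tau)-p}}$ with $r(\sigma,\tau)=\abs{C_\tau}+\abs{C_\sigma}-n-\abs{\tau\sigma^{-1}\pi}$. Since no $x^{(k)}_j$ vanishes, an undecorated loop contains at least two of the factors $U^{x^{(k)}_j}$, so $\abs{C_\sigma}\le n/2$ for $\sigma\in\mathcal{G}$; likewise $\abs{C_\tau}\le n/2$ for $\tau\in\mathcal{H}$ (all cycles even, hence of length $\ge 2$). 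Hence $r(\sigma,\tau)\le 0$ and every such term is $\order{q^{-p}}$. All remaining terms are strictly smaller: a subleading Weingarten correction costs a further $q^{-2}$, and $\sigma\in\overline{\mathcal{G}}$ forces at least one decorated loop so that $N_{\mathrm{ud}}\le n/2-1$, making those terms $\order{q^{-p-1}}$ by the same moment bound. As the number of pairs $(\sigma,\tau)$ and of decorated-loop configurations is fixed as $q\to\infty$, the whole sum is $\order{q^{-p}}$, which is $\order{q^{-2\lfloor p/2\rfloor}}$ because $p\ge 2\lfloor p/2\rfloor$.

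The main obstacle is the diagrammatic bookkeeping of the middle step---assembling the $p$-trace Weingarten diagram and confirming that $G$, $H$ and $\pi$ keep exactly their single-correlator meanings, in particular that $H(\tau)$ retains its simple evaluation for loops visiting several traces. Once this is done the power counting is actually \emph{easier} than the $p=1$ case: the target exponent $-2\lfloor p/2\rfloor$ already follows from $r\le 0$, with no parity argument needed. (If one did repeat the parity step, for odd $p$ one finds $P(\tau\sigma^{-1}\pi)=(-1)^{p}=-1$ whenever $\abs{C_\sigma}=\abs{C_\tau}=n/2$, forcing $r\le -1$ and hence the stronger $\order{q^{-p-1}}$---which at $p=1$ recovers the preceding theorem; while for $p=2$ the bound $\order{q^{-2}}$ is saturated.)
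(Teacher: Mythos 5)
Your proposal is correct and follows essentially the same route as the paper: the auxiliary-unitary/Weingarten expansion over $S_{M}$, the product permutation $\pi=\pi_1\cdots\pi_p$, the split into decorated and undecorated loop configurations controlled by the moment formula of \cite{Diaconis_2001}, and the same power counting via $r(\sigma,\tau)$. Your observation that $r\leq 0$ alone already yields $\order{q^{-p}}\subseteq\order{q^{-2\lfloor p/2\rfloor}}$, with the parity argument needed only to sharpen the odd-$p$ case, is a minor (and accurate) streamlining of the paper's presentation rather than a different method.
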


\begin{proof}
The Haar average of a product of correlation functions, $\langle \mathcal{Z}(\boldsymbol{t}^1)\rangle\cdots \langle\mathcal{Z}(\boldsymbol{t}^p)\rangle$, is written
\begin{equation}
\int dU \langle \mathcal{Z}(\boldsymbol{t}^1)\rangle\cdots \langle\mathcal{Z}(\boldsymbol{t}^p)\rangle = \int dU \prod^p_{i=1} \langle Z(t^{(i)}_1) \cdots Z(t^{(i)}_{n_i}) \rangle.
\end{equation}
As before, define the differences $x^{(i)}_j = t^{(i)}_{j+1} - t^{(i)}_j$ with $x^{(i)}_{n_i} = t^{(i)}_1-t^{(i)}_{n_i}$ such that $x^{(i)}_j\neq 0$ and $\sum_jx^{(i)}_j=0$. The number of instances of $U$ appearing in the product of $p$ correlators is given by $\mathcal{N}_U=\sum_{i,j}\abs{x^{(i)}_j}/2$. Using the same trick as in section \ref{correlator} (representing the Haar integral over an auxiliary unitary as a weighted sum over permutations), we find
\begin{equation}
\int dU \langle \mathcal{Z}(\boldsymbol{t}^1)\rangle\cdots \langle\mathcal{Z}(\boldsymbol{t}^p)\rangle = \sum_{\sigma,\tau\in S_M} \frac{Wg(q,\tau\sigma^{-1})}{q^p} H(\tau)\int dU \raisebox{-0.45\totalheight}{\includegraphics[height=2.1cm]{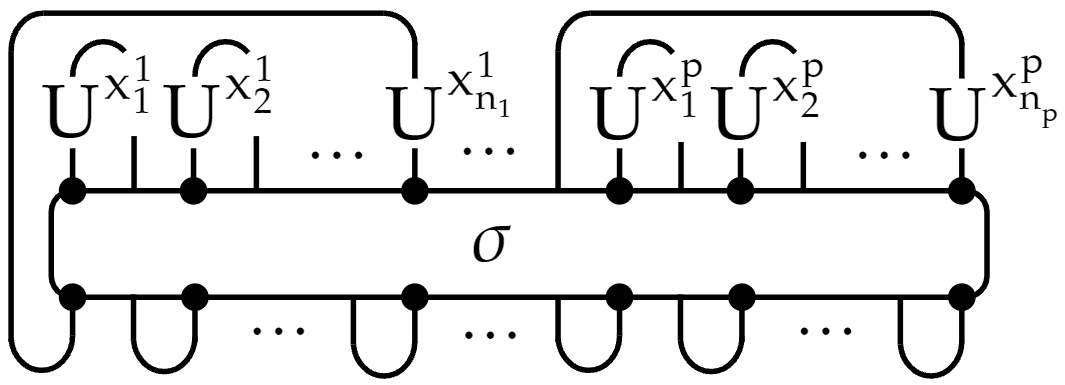}}
\end{equation}
where $M = \sum_i n_i$ and the trace diagram $H(\tau)$ is as in Eq. \ref{Hdef}, but with $M$ legs in total. We define the permutation $\pi = \pi_1 \pi_2 \cdots \pi_p$ where $\pi_i=(M_{i-1} + n_i,M_{i-1} + 1,M_{i-1} + 2,\cdots,M_{i-1}+n_i-1)$ where $M_k=\sum_{i=1}^k n_k$,
\begin{equation*}
\raisebox{-0.45\totalheight}{\includegraphics[height=1.2cm]{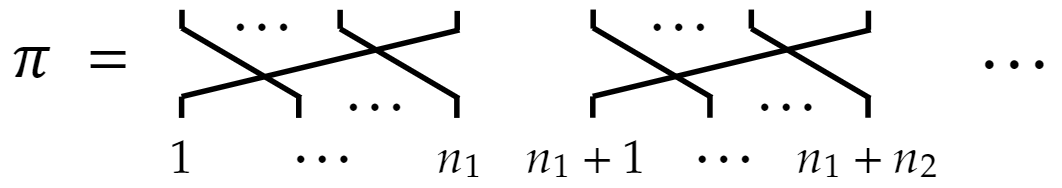}}.
\end{equation*}
For the set of indices $\{M_{i-1}+1,M_{i-1}+2,\cdots,M_{i-1}+n_i\}$ associated with the $i$-th of the $p$ correlators, $\pi_i$ does the same job that the single $\pi$ did in the previous proof. As before we shift the sum variable $\sigma\to\pi\sigma$ to arrive at
\begin{equation}
\int dU \langle \mathcal{Z}(\boldsymbol{t}^1)\rangle\cdots \langle\mathcal{Z}(\boldsymbol{t}^p)\rangle = \sum_{\sigma,\tau\in S_M} \frac{Wg(q,\tau\sigma^{-1}\pi)}{q^p} G(\{\boldsymbol{x}^i\},\sigma)H(\tau)
\end{equation}
where $G(\{\boldsymbol{x}^i\},\sigma)$ is defined as in the previous proof but the variable indexing order is given by $(x^{(1)}_1,\cdots, x^{(1)}_{N_1},x^{(2)}_1,\cdots,x^{(p)}_{N_p})$. As before, we split the sum into two pieces, the first where  $G(\{\boldsymbol{x}^i\},\sigma)$ contains at least two decorated loops and the second, where $G$ contains no decorated loops. Each term in the first sum is at most $\order{q^{-p-1}}$. We go about bounding the terms in the second sum in the same manner as before, by attempting to maximise $G\leq q^{\abs{C_{\sigma}}}$ and $H\leq q^{\abs{C_\tau}}$ and minimising $\textrm{Wg}(q,\tau\sigma^{-1}\pi)$. However, there is now an additional subtlety, the parity of $\pi$ is now dependent on $p\pmod 2$. For odd $p$, precisely the same argument can be made as before and we find $r(\sigma,\tau)\leq -1$, whereas for even $p$, we must use the weaker bound $r(\sigma,\tau)\leq 0$. The terms in the sum over permutations are then individually of size at most $\order{q^{-2\lfloor p/2 \rfloor}}$. Moreover, the number of these terms depends only on the parameter $M$ and $\mathcal{N}_U$, the number of instances of $U$ appearing in the product of $p$ correlators, neither of these scale with $q$. Therefore, We find the desired result,
\begin{equation}
\int dU \langle \mathcal{Z}(\boldsymbol{t}^1)\rangle\cdots \langle\mathcal{Z}(\boldsymbol{t}^p)\rangle = \order{1/q^{2\lfloor p/2 \rfloor}} \quad \textrm{as } q\to\infty
\end{equation}
\end{proof}

\section{Product of two correlation functions}\label{two-correlators}
In the previous sections we have found the bounds on the scaling behaviour of Haar averaged correlators and products of correlators as $q\to\infty$. In this section we present results that also predict the proportionality constants for the Haar average of a product of two correlators.
\begin{theorem}\label{correlators}
	The Haar average of a product of ATO's $\langle \mathcal{Z}(\boldsymbol{t}) \rangle$ and $\langle \mathcal{Z}(\boldsymbol{t}') \rangle^*$, where $\boldsymbol{t}=(t_1,\cdots,t_n)$ and $\boldsymbol{t}'=(t'_1,\cdots,t_{n'}')$,
	is given by
    \begin{equation}
        \int dU \langle \mathcal{Z}(\boldsymbol{t}) \rangle \langle \mathcal{Z}(\boldsymbol{t}') \rangle^* = \frac{1}{q^2}\sum_{\textrm{cyclic perm } \alpha}\delta^{\boldsymbol{t}',\alpha(\boldsymbol{t})} + \order{1/q^{3}},
    \end{equation}
    where $\delta^{\boldsymbol{t}',\alpha(\boldsymbol{t})}=1$ if the sequences of times $\boldsymbol{t}'$ and $\alpha(\boldsymbol{t})$ are equal up to a global time translation and where $\alpha(\boldsymbol{t})$ is the cyclic permutation $\alpha$ of the sequence $\boldsymbol{t}$. The delta constraint is zero otherwise.
\end{theorem}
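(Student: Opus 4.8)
The plan is to push the $p=2$ case of the proof of Theorem~\ref{productofcorrelators} one step further, enumerating the surviving terms and evaluating their coefficients rather than merely bounding them. First I put $\langle\mathcal Z(\boldsymbol t')\rangle^\ast$ into the loop form used in Section~\ref{correlator}: using $\overline{\Tr M}=\Tr M^\dagger$ and $Z(t)^\dagger=Z^\dagger(t)$, the conjugated correlator is again a single trace–loop $\tfrac1q\Tr[Z^\dagger U^{y_1}Z^\dagger U^{y_2}\cdots Z^\dagger U^{y_{n'}}]$, whose difference sequence $\boldsymbol y$ is, read cyclically, the reversal of $-\boldsymbol x'$ and still obeys $\sum_j y_j=0$ (for $Z=Z^\dagger$ this is just $\langle\mathcal Z(\tilde{\boldsymbol t}')\rangle$ with $\tilde{\boldsymbol t}'=(t'_{n'},\dots,t'_1)$). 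The conjugation is precisely what lets this object's $U$-powers annihilate against those of $\langle\mathcal Z(\boldsymbol t)\rangle$, hence what makes the connected $1/q^2$ term appear. Running the auxiliary-unitary trick and the Weingarten formula as in the proof of Theorem~\ref{productofcorrelators}, with $M=n+n'$ and $\pi=\pi_1\pi_2$, gives $\int dU\,\langle\mathcal Z(\boldsymbol t)\rangle\langle\mathcal Z(\boldsymbol t')\rangle^\ast=\sum_{\sigma,\tau\in S_M}q^{-2}\,\textrm{Wg}(q,\tau\sigma^{-1}\pi)\,G(\sigma)H(\tau)$, with $G,H$ as in Section~\ref{correlator} but built from the combined difference sequence $(\boldsymbol x,\boldsymbol y)$.

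Next I extract the $1/q^2$ piece. By the power counting in the proof of Theorem~\ref{productofcorrelators}, terms with $\sigma\notin\mathcal G$ are $\order{q^{-3}}$, while for $\sigma\in\mathcal G$ each term equals $a(\tau\sigma^{-1}\pi)\,q^{\,r(\sigma,\tau)-2}$ up to $\order{q^{-3}}$, with $r(\sigma,\tau)=\abs{C_\sigma}+\abs{C_\tau}-M-\abs{\tau\sigma^{-1}\pi}\le 0$. Because a $\sigma\in\mathcal G$ has no fixed points (a fixed point would carry a nonzero net $U$-power), $\abs{C_\sigma}\le M/2$ with equality only for a fixed-point-free involution; likewise $\abs{C_\tau}\le M/2$ within $\mathcal H$ with equality only for a fixed-point-free involution; and $\abs{\tau\sigma^{-1}\pi}\ge 0$. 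Hence $r=0$ forces $\sigma$ and $\tau$ to be fixed-point-free involutions with $\tau\sigma^{-1}\pi=e$ (equivalently $\tau=\sigma\pi$) and $\sigma\in\mathcal G$, i.e.\ $\sigma$ pairs every position carrying $U^{a}$ with one carrying $U^{-a}$. For any such pair the Weingarten asymptotic~\eqref{asymptotic_weingarten} has leading coefficient $a(e)=1$ and $G(\sigma)=H(\tau)=q^{M/2}$, so the term equals exactly $q^{-2}$ up to $\order{q^{-3}}$. Thus the $q^{-2}$ coefficient is the number $N^\star$ of fixed-point-free involutions $\sigma\in\mathcal G$ with $\tau=\sigma\pi$ also a fixed-point-free involution.

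Then I count $N^\star$. A $\sigma$ that pairs two positions inside the same block is impossible: then $\tau=\sigma\pi$ would be block-diagonal with $i$-th block $\sigma^{(i)}\pi_i$, which cannot be a fixed-point-free involution — when $n_i$ is odd no such involution exists, and when $n_i$ is even the parity of $\pi_i$ (an $n_i$-cycle) rules it out — so $H(\tau)=0$. Therefore every contributing $\sigma$ is a bijection block~$1\leftrightarrow$ block~$2$, which forces $n=n'$. Writing $\sigma:j\mapsto\phi(j)'$ and requiring $\tau=\sigma\pi$ to be an involution, i.e.\ that $\sigma$ conjugate $\pi$ to $\pi^{-1}$, a short computation with $\pi_1,\pi_2$ (the ``$+1$ mod $n$'' cycles on the two blocks) pins $\phi$ down to a reflection $\phi(j)=c-j\pmod n$ — the $n$ dihedral gluings of the two cyclic index sets. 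Finally the constraint $\sigma\in\mathcal G$, that the matched $U$-powers cancel, reads $x_j=-(\text{power at }\phi(j)')$ for all $j$; since $\boldsymbol y$ is cyclically the reversal of $-\boldsymbol x'$, this says precisely that $\boldsymbol x'$ is a cyclic rotation of $\boldsymbol x$, which by $x_i=t_{i+1}-t_i$ is exactly the statement that $\boldsymbol t'$ equals a cyclic permutation of $\boldsymbol t$ up to a global time shift. Distinct admissible $c$ give distinct $\sigma$ and vice versa, so $N^\star=\#\{\text{cyclic }\alpha:\boldsymbol t'=\alpha(\boldsymbol t)\text{ up to translation}\}=\sum_\alpha\delta^{\boldsymbol t',\alpha(\boldsymbol t)}$; inserting this into the previous paragraph gives the claimed formula with remainder $\order{q^{-3}}$.

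The main obstacle is this last step — converting the algebraic conditions ``$\sigma$ and $\sigma\pi$ are both fixed-point-free involutions'' and ``$\sigma$ respects the $U$-power pairing'' into the clean statement about cyclic rotations of the time sequence — while correctly bookkeeping: (i) the order-reversal and sign introduced by the complex conjugation, so the condition lands on $\boldsymbol x'$ rather than on its reversal/negation; (ii) the cyclic-plus-translation equivalence built into the definition of a correlator, so that no admissible $\sigma$ is double counted and the normalisation $t_1\le t_i$ costs nothing; and (iii) the case where $\boldsymbol x$ has a nontrivial cyclic symmetry, in which both $N^\star$ and the number of admissible $\alpha$ acquire the same symmetry factor — the cyclic-symmetry count advertised in the introduction. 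A convenient way to organise (i)–(iii) is to note that $\sigma\tau=\pi$ makes the union $\langle\sigma,\tau\rangle$ of the two matchings a single $2n$-cycle, so admissible $\sigma$ are in bijection with boundary-to-boundary gluings of the two cyclic sequences of time-differences, which are counted by rotations.
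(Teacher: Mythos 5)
Your proposal is correct in substance and lands on the same formula, but it replaces the paper's central combinatorial step with a genuinely different argument. The paper works in the unshifted frame (it deliberately does \emph{not} insert $\pi$ here), reduces to $\sigma=\tau\in\mathcal{T}_N$ by the same power counting you use, and then runs a diagrammatic \emph{cascade}: a transposition internal to one correlator forces a chain of further transpositions on the wiring of $\tilde G$ that terminates either in a singly-decorated loop or a forced $1$-cycle, so all transpositions must span the two correlators; the surviving pairings $\sigma_m$ are then read off the pictures and evaluated to give $\delta^{\boldsymbol{y},\alpha_{m}(\boldsymbol{x})}$. You instead work in the $\pi$-shifted frame of Theorem \ref{productofcorrelators} and extract everything algebraically: saturation of $r(\sigma,\tau)=0$ forces $\sigma$ and $\tau$ to be fixed-point-free involutions with $\tau\sigma^{-1}\pi=e$, hence $\sigma\pi\sigma^{-1}=\pi^{-1}$, so $\sigma$ must permute the two orbits (blocks) of $\pi=\pi_1\pi_2$; the block-diagonal option dies by the same parity argument as Theorem 1, the block-swapping gluings are the $n$ dihedral reflections, and the residual condition $\sigma\in\mathcal{G}$ is exactly the cyclic-match condition on the time differences. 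Your route buys a cleaner structural explanation (the dihedral group of gluings of two cycles, with the symmetry factor $S(\boldsymbol{t})$ appearing as the stabiliser count) and avoids case-by-case diagram chasing; the paper's cascade is more elementary and generalises more directly to the diagrammatics of Section \ref{OTOC section}.

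One point needs tightening. Where you first exclude within-block pairings you assert that such a $\sigma$ makes $\tau=\sigma\pi$ block-diagonal; that is only true if $\sigma$ is \emph{entirely} block-diagonal, and a ``mixed'' $\sigma$ (some transpositions internal to a block, some spanning) is not covered by that sentence. The fix is already in your own later observation: since $\tau=\sigma\pi$ is an involution, $\sigma$ conjugates $\pi$ to $\pi^{-1}$, and conjugation carries the cycle $\pi_1$ (as a set, block 1) onto a single cycle of $\pi^{-1}$, so $\sigma$ maps each block wholesale either to itself or to the other block --- mixed $\sigma$ cannot occur. State that first, and the dichotomy ``block-diagonal (killed by parity) versus block-swapping (forcing $n=n'$ and the $n$ reflections)'' is then airtight.
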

This result can be equivalently states as
\begin{equation}
    \int dU \langle \mathcal{Z}(\boldsymbol{t}) \rangle \langle \mathcal{Z}(\boldsymbol{t}') \rangle^* = \frac{S(\boldsymbol{t})}{q^2}\tilde{\delta}^{\boldsymbol{t}',\boldsymbol{t}} + \order{1/q^3},
\end{equation}
where $\tilde{\delta}^{\boldsymbol{t}',\boldsymbol{t}}$ checks whether the times $\boldsymbol{t}'$ and $\boldsymbol{t}$ are equal up to global time-translations and cyclic permutations and $S(\boldsymbol{t})$ is a symmetry factor that counts the number of cyclic symmetries $\boldsymbol{t}$ has. For instance, if $\langle \mathcal{Z}(\boldsymbol{t})\rangle$ is a TO correlator, $S(\boldsymbol{t})=1$, while for OTO correlators $1 \leq S(\boldsymbol{t})\leq 2$. For a contour with $m$ forward and backward segments, the symmetry factor obeys $1 \leq S(\boldsymbol{t})\leq m$.

\begin{proof}
An equivalent representation of the product of correlators is given by
\begin{equation}
\langle \mathcal{Z}(\boldsymbol{t}) \rangle \langle \mathcal{Z}(\boldsymbol{t}') \rangle^* = \langle \prod_i (ZU^{x_i}) \rangle \langle \prod_j (ZU^{y_j}) \rangle^*
\end{equation}
where $x_i= t_{i+1}-t_i$ and the edge case $x_1 = t_1 - t_n$. Similarly, $y_i= t'_{i+1}-t'_i$ and the edge case $y_1 = t'_{n'} - t_1$. By assumption, all consecutive times differ so that all $x_i\neq 0$ and $y_j\neq 0$. By using the left and right invariance of the Haar measure once again, we write the Haar averaged expression as
\begin{equation}\label{av_prod_corr_Weingarten}
    \int dU \langle \mathcal{Z}(\boldsymbol{t}) \rangle \langle \mathcal{Z}(\boldsymbol{t}') \rangle^* = \frac{1}{q^2}\sum_{\sigma,\tau\in S_N}Wg(\sigma\tau^{-1})\tilde{G}(\boldsymbol{x},\boldsymbol{y},\sigma)H(\tau),
\end{equation}
where $N\equiv n+n'$, $\boldsymbol{x}=(x_1,\cdots,x_n)$, $\boldsymbol{y}=(y_1,\cdots,y_{n'})$ and where $H(\tau)$ is as given in Eq. \ref{Hdef}, but with $N$ legs. As before, $H(\tau)$ is zero unless $N$ is even, in the remainder of this proof we will assume this is so. This proof differs from the previous proofs in that we choose not to shift the permutation $\sigma$ by some fixed permutation $\pi$, we instead have defined the function $\tilde{G}(\boldsymbol{x},\boldsymbol{y},\sigma)$, which differs in wiring from the function $G$ defined the previous sections. It is given by
\begin{equation}
    \tilde{G}(\boldsymbol{x},\boldsymbol{y},\sigma)=\int dU
	\raisebox{-0.45\totalheight}{\includegraphics[height = 2.1cm]{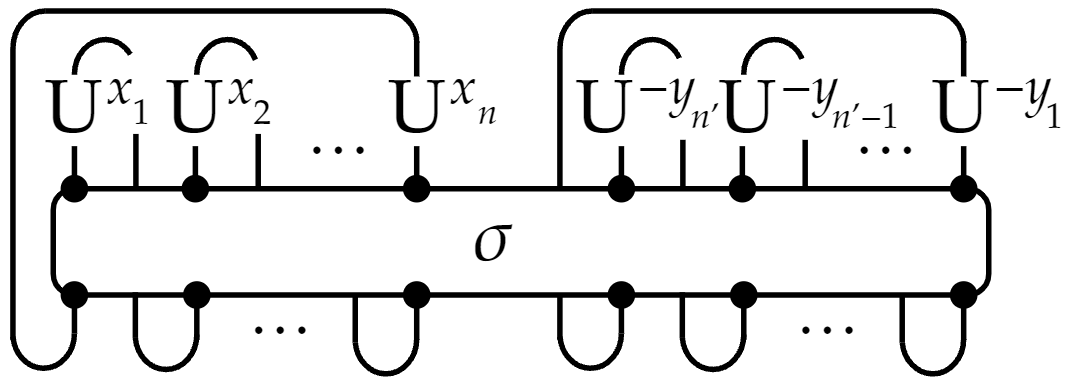}}.
\end{equation}
Despite differing from the $G$ of the previous proofs, the same argument of counting undecorated loops holds. When each $U^a$ decoration pairs up with a $U^{-a}$ on $N/2$ loops, $\tilde{G}(\boldsymbol{x},\boldsymbol{y},\sigma)$ is maximised (for large $q$), giving $\tilde{G}(\boldsymbol{x},\boldsymbol{y},\sigma)=q^{N/2}$. Whereas, $H(\tau)$ is maximised when all the $Z$'s are paired up on separate loops which gives a maximum of $N/2$ loops, giving $H(\tau) = q^{N/2}$. We therefore have two bounds
\begin{equation}
    0\leq H(\tau)\leq q^{N/2}, \ 0\leq \tilde{G}(\boldsymbol{x},\boldsymbol{y},\sigma)\leq q^{N/2}
\end{equation}
Using the asymptotic form of the Weingarten function in Eq. \ref{asymptotic_weingarten}, the sum in Eq. \ref{av_prod_corr_Weingarten} is bounded by $C(\boldsymbol{x},\boldsymbol{y}) q^{N/2+N/2-N-|\sigma\tau^{-1}|-2}=C(\boldsymbol{x},\boldsymbol{y}) q^{-|\sigma\tau^{-1}|-2}$, for some $C(\boldsymbol{x},\boldsymbol{y})$ that is independent of $q$. This is $\order{q^{-2}}$ only when $\sigma=\tau$ and $\leq\order{q^{-3}}$ otherwise. As our goal is to calculate the leading order $1/q^2$ contributions, the inequality above indicates that we need only consider cases where $\sigma=\tau$, and where $\tau$ is a product of $N/2$ disjoint transpositions. This gives:
\begin{equation}
    \int dU \langle \mathcal{Z}(\boldsymbol{t}) \rangle \langle \mathcal{Z}(\boldsymbol{t}') \rangle^* = \frac{1}{q^{2+N/2}}\sum_{\sigma\in \mathcal{T}_N}\tilde{G}(\boldsymbol{x},\boldsymbol{y},\sigma) + \order{q^{-3}}
\end{equation}
where $\mathcal{T}_N$ is the set of permutations $\sigma$ that are products of $N/2$ disjoint transpositions. Given such a permutation $\sigma$, suppose that it contains a transposition $T_{1,i+1}$ between legs $1$ and $i+1$ and that $i+1\leq n$ (such that the transposition is `within' the indices of first correlation function). Such a transposition puts the unitaries $U^{x_1}$ and $U^{x_i}$ on the same wire. We remind the reader that the theoretical maximum of $\tilde{G}$ is achieved if all unitaries are paired ($U^x$ with $U^{-x}$) on separate loops. Therefore, to ensure that no other unitaries share the loop with $U^{x_1}$ and $U^{x_i}$, we must choose the next transposition as $T_{2,i}$, this closes the wire with $U^{x_1}$ and $U^{x_i}$ into a loop. This is shown below,

\begin{figure}[H]
	\centering
	\includegraphics[height = 2.5cm]{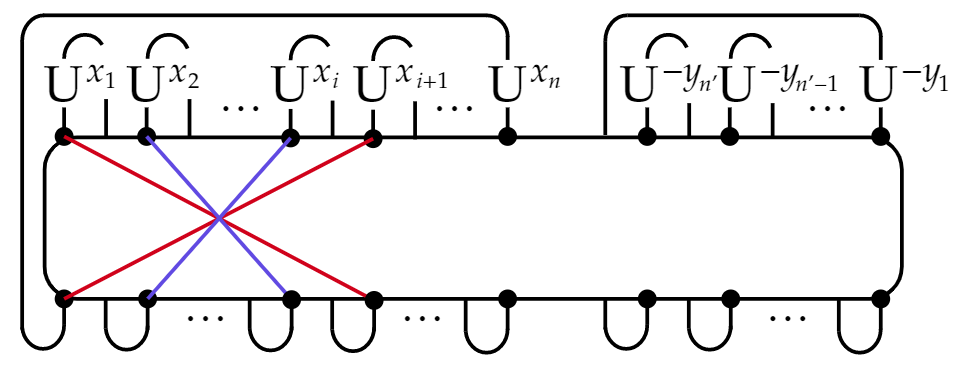}
\end{figure}
The requirement that unitaries are paired only, forces us pick transpositions that cascade inwards until one of two scenarios occur, depending on whether $i$ is even or odd. If $i$ is even, then the cascade ends as shown below, with a loop with a single unitary decorating it. This violates the conditions for maximising $\tilde{G}$.
\begin{figure}[H]
	\centering
	\includegraphics[height = 2.3cm]{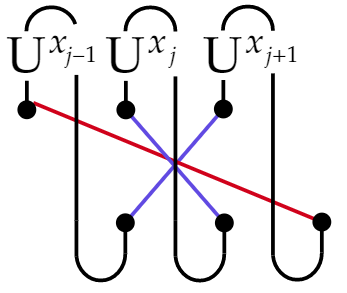}
\end{figure}
If $i$ is odd, then the cascade ends by forcing the choice of a $1$-cycle, instead of a transposition, in order to have the unitaries paired up. This permutation is not a member of $\mathcal{T}_N$, and so this possibility is dismissed.
\begin{figure}[H]
	\centering
	\includegraphics[height = 2.3cm]{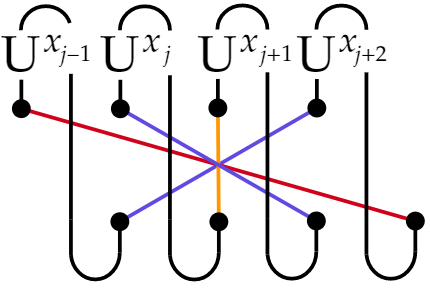}
\end{figure}
We need not have started with the transposition $T_{1,i+1}$ with $i<n$. If $\sigma$ contains any transpositions $T_{i,j}$ with $1\leq i,j\leq n$, $i\neq j$, then the same cascade argument above can be used to demonstrate that we cannot pair up unitaries onto separate loops with disjoint transpositions only. The same argument can be made by starting with a transposition exclusively within the second correlator. We therefore conclude that in order to maximise $\tilde{G}$, $\sigma$ must be comprised of disjoint transpositions where every transposition spans between the two correlators. An immediate consequence of this is that the number of legs in each correlator must be equal, $n=n'$. Assuming a transposition $T_{1,n+1+i}$, $0\leq i< n$, we again employ the cascade argument, where the transposition $T_{2,n+i}$ is forced next and then $T_{3,n+i-1}$ and so on until we stop at the transposition $T_{i+1,n+1}$. This is shown in the diagram below, where we have only displayed the time labels to declutter the picture.
\begin{figure}[H]
	\centering
	\includegraphics[height = 3cm]{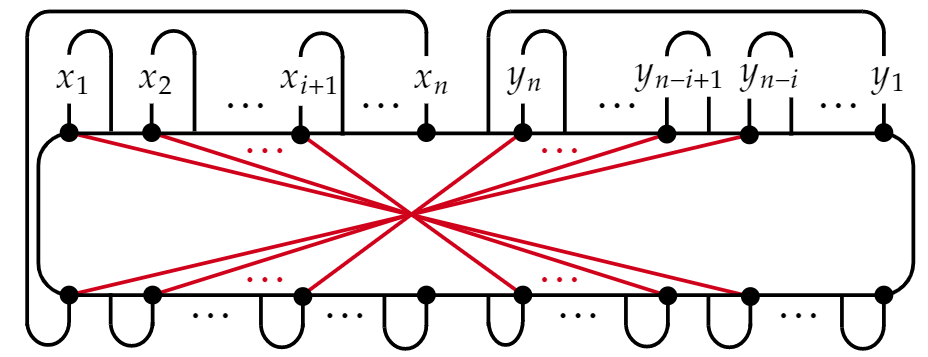}
\end{figure}
Notice that the final transposition added places the unitaries $U^{x_{i+1}}$ and $U^{-y_1}$ on the same wire, in order to close this loop with no other unitaries decorating it, we must choose the transposition $T_{i+2,2n}$, this starts another cascade with the next forced transposition being $T_{i+3,2n-1}$, until we finish with the final transposition $T_{n,n+2+i}$. This is shown below.
\begin{figure}[H]
	\centering
	\includegraphics[height = 3cm]{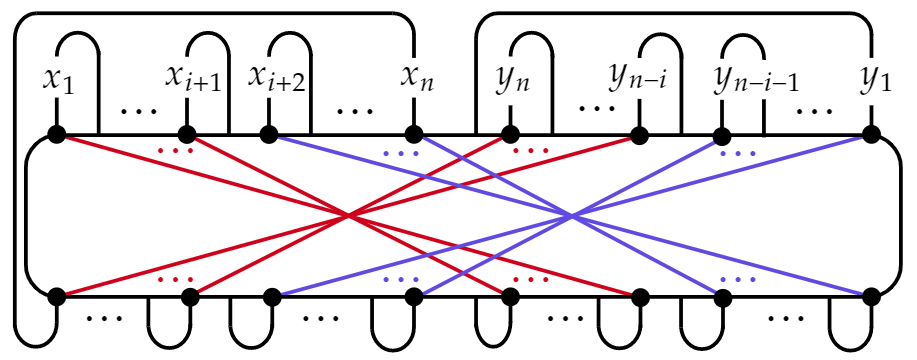}
\end{figure}
It is clearer to distinguish the two blocks of transpositions as shown below with the first block having $i$ transpositions and the second having $n-i$.
\begin{figure}[H]
	\centering
	\includegraphics[height = 3.2cm]{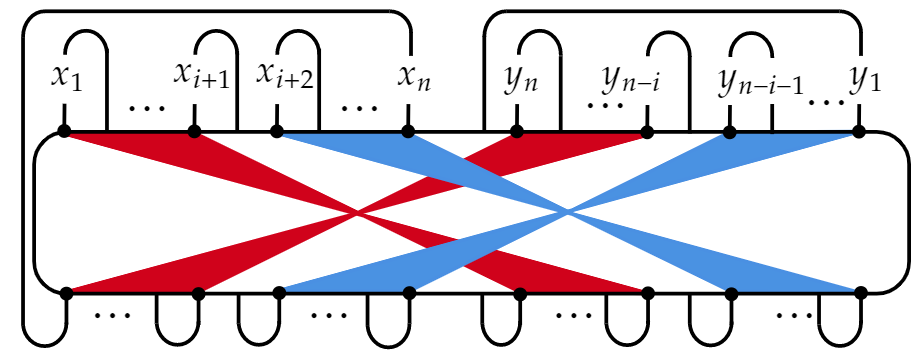}
	\caption{permutations $\sigma$ that pair the $U^{x_i}$ and $U^{-y_j}$ onto separate loops. \label{butterfly_cascade}}
\end{figure}
Therefore, only permutations of the form below may contribute at leading order ($\tilde{G}=q^{N/2}$).
\begin{equation}
    \sigma_m=
	\raisebox{-0.55\totalheight}{\includegraphics[height = 1.8cm]{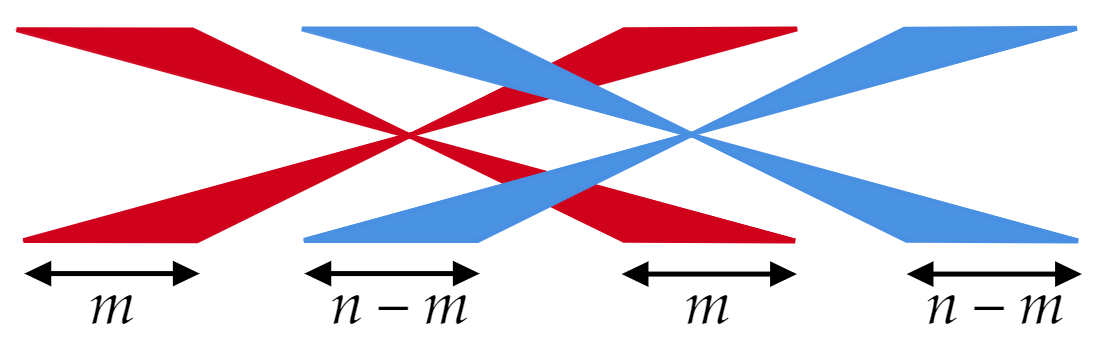}}
\end{equation}
With $\sigma_m$, the trace diagram in Fig. \ref{butterfly_cascade} has value
\begin{align}
    &Tr(U^{x_1-y_{n-m+1}})Tr(U^{x_2-y_{n-m+2}})\cdots Tr(U^{x_{m}-y_{n}}) Tr(U^{x_{m+1}-y_{1}})\nonumber \\
    &Tr(U^{x_{m+2}-y_{2}})\cdots Tr(U^{x_{n-1}-y_{n-1-m}})Tr(U^{x_{n}-y_{n-m}})
\end{align}
For the Haar average of the trace diagram to contribute at leading order ($\tilde{G}=q^{N/2}$), all the indices must cancel, this is expressed below
\begin{equation}
    q^{-N/2}\int dU G(\boldsymbol{x},\boldsymbol{y},\sigma_m) = \delta^{\boldsymbol{y},\alpha_{m+1}(\boldsymbol{x})} + \order{1/q}
\end{equation}
where the Kronecker delta checks that the $\boldsymbol{y}$ and $\alpha_{m+1}(\boldsymbol{x})$ are identical strings
where $\alpha_{n}$ is the cyclic permutation $\alpha_k=(k,k+1,\cdots,n,1,2,\cdots,k-1)$.
Therefore, the Haar average can be simplified further,
\begin{equation}
    \int dU \langle \mathcal{Z}(\boldsymbol{t}) \rangle \langle \mathcal{Z}(\boldsymbol{t}') \rangle^* = \frac{1}{q^{2}}\sum_{m=1}^{n}\delta^{\boldsymbol{y},\alpha_{m}(\boldsymbol{x})} + \order{q^{-3}}
\end{equation}
Then, with an abuse of notation, where the Kronecker delta now checks whether its arguments are equal up to global time shifts, we write
\begin{equation}
    \int dU \langle \mathcal{Z}(\boldsymbol{t}) \rangle \langle \mathcal{Z}(\boldsymbol{t}') \rangle^* = \frac{1}{q^{2}}\sum_{m=1}^{n}\delta^{\boldsymbol{t}',\alpha_{m}(\boldsymbol{t})} + \order{q^{-3}}
\end{equation}
This is precisely what we wanted to show.
\end{proof}

\section{Out-of-time-ordered correlators (OTOCs)}\label{OTOC section}
For lack of a better name, we will refer OTOCs of the form $\langle Z A Z(T) B^\dagger Z C Z(T) D^\dagger \rangle$
where $A$, $B$, $C$ and $D$ are time-ordered products of the form $Z(1)^{a_1}\cdots Z(T-1)^{a_{T-1}}$ for $a_1\in\{0,1\}$, as `physical' OTOCs because they appear naturally in the perturbative expansions in the memory matrix calculation of \cite{MMFpaper}. Physical OTOCs can also be represented as contraction of a tensor $\Gamma=\Gamma(1)\cdots\Gamma(T-1)$ which is given by the product of $T-1$ tensors, each with 4 input and output legs ($1,\overline{1},2,\overline{2}$) as shown below
\begin{figure}[H]
	\centering
	\large
	$\textrm{OTOC}_\Gamma =$ \Large $\frac{1}{q}$ \raisebox{-0.45\totalheight}{\includegraphics[height = 2cm]{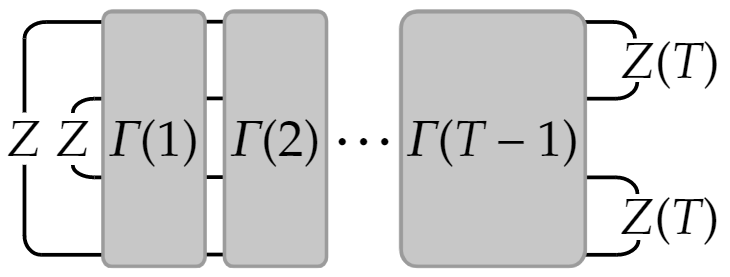}}
	\caption{A physical OTOC, each layer $\Gamma(j)$ represents possible insertions of operators $Z(j)$ on legs $1$ and $2$ and operators $Z(j)^*$ on legs $\overline{1}$ and $\overline{2}$. From top to bottom, the leg ordering is $1,\overline{1},2,\overline{2}$. \label{physical OTOC}}
\end{figure}
where vertical layer $\Gamma(j)$ has the option of placing the operator $Z(j)$ on legs $1$ and $2$ and the operator $Z(j)^*$ on legs $\overline{1}$ and $\overline{2}$. We define $\ket{+}$ and $\ket{-}$ as the following wirings,
\begin{equation}\label{plusminusdef}
    \ket{+} = \frac{1}{q}\ \raisebox{-0.45\totalheight}{\includegraphics[height = 1.6cm]{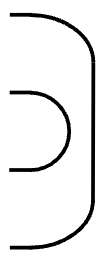}}, \quad \ket{-} = \frac{1}{q} \raisebox{-0.45\totalheight}{\includegraphics[height = 1.6cm]{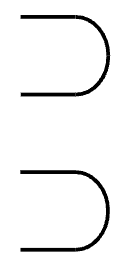}}.
\end{equation}
Let us next define $\bra{+}$ and $\bra{-}$ as the horizontal reflections of $\ket{+}$ and $\ket{-}$. A closed loop represents the trace $\Tr(\mathbb{1})$ and so has the value $q$, therefore, $\ket{\pm}$ are properly normalised, $\bra{\pm}\ket{\pm}=1$. Define $\bra{Z_+}$ ($\ket{Z(T)_-}$) as the $\bra{+}$ ($\ket{-}$) wiring with $Z$ ($Z(T)$) operators decorating both wires. With these definitions, we can rewrite the OTOC as
\begin{equation}
    \textrm{OTOC}_\Gamma = q\bra{Z_+}\Gamma\ket{Z(T)_-}
\end{equation}
Finally, define the projector
\begin{equation}\label{Kdef}
    K=\frac{1}{1-q^2}\left(\ket{+}\bra{0} + \ket{-}\bra{\perp}\right),
\end{equation}
where $\bra{0}=\bra{+}-\frac{1}{q}\bra{-}$ and $\bra{\perp}=\bra{-}-\frac{1}{q}\bra{+}$. Checking that $K$ is a projector is a simple task, and is omitted here.

\begin{theorem}\label{OTOCs}
    The Haar average of a physical OTOC, $\textup{OTOC}_\Gamma=q\bra{Z_+}\Gamma\ket{Z(T)_-}$, is found to leading order in $1/q$ by inserting the projector $K$ either side of every layer $\Gamma(j)$,
    \begin{equation}
        \int dU \textrm{OTOC}_\Gamma = q\bra{Z_+}K\Gamma(1)K\cdots K\Gamma(T-1)K\ket{Z(T)_-} + \order{1/q^3} \quad \textrm{as $q\to\infty$}.
    \end{equation}
\end{theorem}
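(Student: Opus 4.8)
The plan is to expand $\int dU\,\textrm{OTOC}_\Gamma$ directly with the Weingarten calculus, exactly as in the proof of Theorem~1, and then to re-sum the permutation diagrams that result. First I would rewrite the transfer-matrix contraction $q\bra{Z_+}\Gamma(1)\cdots\Gamma(T-1)\ket{Z(T)_-}$ so that all of the $U$-dependence sits in $T$ copies of the single one-step evolution super-operator $\mathcal U=(U\otimes\bar U)^{\otimes2}$ on the four wires $1,\bar 1,2,\bar 2$, interleaved with $q$-independent layers placing bare $Z$'s and $Z^{*}$'s; passing to this interaction picture costs nothing because $K$ absorbs the conjugating powers $U^{\pm j}$. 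Averaging over the single $U$ then produces a sum over a pair of permutations $(\sigma,\tau)$ — equivalently, over matchings of the $\order{T}$ factors of $U$ with those of $U^{\dagger}$ — weighted by Weingarten functions and multiplying a product of loops, some decorated by net powers of $U$: precisely the structure of the functions $G(\boldsymbol x,\sigma)$, $H(\tau)$ of Section~\ref{correlator}.

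The second step is to isolate the leading $\order{1/q^{2}}$ diagrams, which I claim are exactly the \emph{slice-local} matchings: those in which every $U$ carried by the $t$-th copy of $\mathcal U$ is paired with a $U^{\dagger}$ from the same copy, so that each time slice produces its own undecorated loops and all inter-slice loops are trivial. From $K=\int dV\,(V\otimes\bar V)^{\otimes2}$ one has $\mathcal U K=K\mathcal U=K$ and $K^{2}=K$, so a slice-local diagram factorises across the $T$ slices and the per-slice factor is the one-step Haar twirl restricted to the second-moment subspace $\mathrm{span}\{\ket+,\ket-\}$. A finite computation with the $S_{2}$ Weingarten values $\textrm{Wg}(q,e)=1/(q^{2}-1)$ and $\textrm{Wg}(q,(12))=-1/(q(q^{2}-1))$ and the overlaps $\bra{+}\ket{+}=\bra{-}\ket{-}=1$, $\bra{+}\ket{-}=1/q$ identifies this restricted twirl with the projector $K$ of Eq.~\eqref{Kdef}. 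Re-summing the slice-local diagrams then gives $q\bra{Z_+}K\Gamma(1)K\cdots K\Gamma(T-1)K\ket{Z(T)_-}$, which one checks is $\order{1/q^{2}}$, consistent with Theorem~1; the tracelessness $\Tr Z=0$ and normalisation $\Tr Z^{\dagger}Z=q$ enter here to decide which $\ket{\pm}$-components of $\bra{Z_+}$ and $\ket{Z(T)_-}$ survive.

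The remaining and hardest step is to show that every matching that is \emph{not} slice-local contributes $\order{1/q^{3}}$, i.e.\ at least one power of $q$ below the slice-local ones. A non-slice-local matching necessarily leaves at least two decorated loops (the exponents over all loops sum to zero, so a single decorated loop is impossible), and by the moment identity of \cite{Diaconis_2001} already used in Section~\ref{correlator} such loops contribute a quantity bounded independently of $q$, whereas a cancelled loop contributes a factor $q$; one then reruns the bookkeeping of the proof of Theorem~1, balancing the cycle counts of $\sigma$ and $\tau$ against the power of $q$ supplied by the asymptotic Weingarten expansion \eqref{asymptotic_weingarten}, to verify that this lost power of $q$ is never recovered — neither by the inter-slice diagrams nor by the subleading tails of the Weingarten expansion — uniformly in $q$, with a constant depending on $T$. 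The delicate point is purely combinatorial: one must show that no matching linking two or more distinct time slices can still saturate the maximal loop count, the analogue of the parity obstruction $r(\sigma,\tau)\le-1$ behind Theorem~1. In effect, the theorem says that the iterated Haar dynamics $U,U^{2},\dots,U^{T}$ decouples across time steps at leading order, with connected corrections suppressed by at least one power of $1/q$.
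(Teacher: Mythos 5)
Your high-level picture is the right one, and it matches the mechanism the paper itself points out after the proof: at $\order{1/q^2}$ the single shared unitary behaves as if each layer were scrambled independently, the per-slice twirl $\int dV\,(V\otimes\bar V)^{\otimes 2}$ is the projector onto $\mathrm{span}\{\ket{+},\ket{-}\}$ (your exact $S_2$ Weingarten computation of this is fine, and in fact exposes a prefactor typo in Eq.~\ref{Kdef}: as printed, $K^2\neq K$; the prefactor should be $1/(1-q^{-2})$), and the leading diagrams the paper ultimately retains --- the single-domain-wall $\ket{-}\bra{-}/\ket{+}\bra{+}$ configurations of Eq.~\ref{OTOC_with_projectors} and the four-cycle terms of Eq.~\ref{OTOCminus_with_projectors} --- are indeed all ``slice-local'' in your sense. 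So the decomposition you propose is a legitimate and arguably cleaner way to organise the answer.

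The genuine gap is that the entire burden of the theorem sits in the step you defer: proving that every $(\sigma,\tau)$ that links two distinct time slices loses at least one power of $q$. You assert this and propose to obtain it by ``rerunning the bookkeeping of Theorem~1'' together with the moment identity of \cite{Diaconis_2001}, but neither tool applies here. The paper's OTOC expansion deliberately introduces \emph{no} auxiliary unitary, so after the Weingarten formula there are no decorated loops $\Tr(U^m)$ left for the moment identity to act on --- every loop is an undecorated index loop, and the whole difficulty is counting how many index loops a given pairing produces. Your claim that ``a non-slice-local matching necessarily leaves at least two decorated loops'' is therefore not even well-posed in this setting. Moreover, Theorem~1's argument (cycle-counting plus the parity obstruction $P(\pi)=-1$) only establishes an order-of-magnitude \emph{upper bound} for a single correlator; it does not identify which pairings saturate the maximal loop count, which is what you need to extract the exact coefficient. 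The paper supplies this with a dedicated apparatus --- the cobweb diagrams, the parallel-edge and edge-bubble rules, the bound $N\leq E-E'/3$ on index loops in terms of the residual crossed edges $E'$, the case analysis $E'=0,2,3,\geq 4$, and the colour/$\Tr Z=0$ constraints that kill the planar and $E'=3$ families --- and an analogous (and fiddlier) analysis for the $|\sigma\tau^{-1}|=1$ four-cycle sector, which your ``matching'' language does not cleanly cover. Without a replacement for that combinatorial core, and without controlling the $\order{1/q^4}$ mismatch between the true $S_{2T}$ Weingarten weight on slice-local permutations and the product of per-slice $S_2$ weights (which you use implicitly when you factorise the slice-local sum), the proposal is a correct statement of what must be shown rather than a proof of it.
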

\begin{proof}
We have already proven in section \ref{correlator} that the Haar average of a single correlation function is bounded by $C/q^2$ in the large $q$ limit for some $q$ independent constant $C$. We aim to identify which physical OTOCs are $\order{1/q^2}$ and what the precise contribution is. A generic physical OTOC is shown in the figure below
\begin{equation}
	\raisebox{-0.52\totalheight}{
	\includegraphics[height = 2cm]{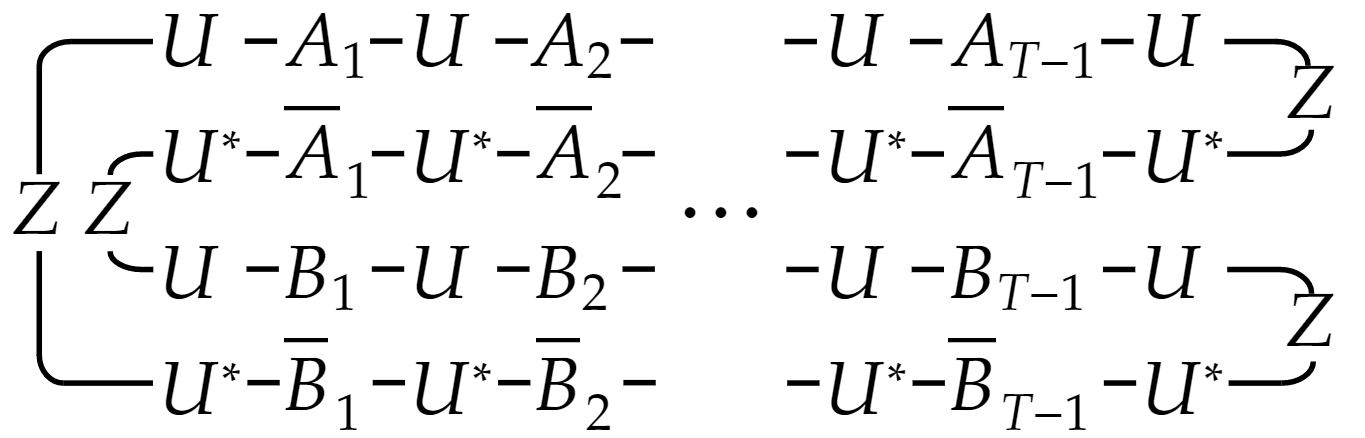}},
\end{equation}
where $A_i,\overline{A}_i,B_i,\overline{B}_i\in \{\mathbb{1},Z\}$. Writing its Haar average as a sum over permutations with Weingarten weights (without making use of an auxiliary unitary as done in the previous proofs), we find
\begin{equation}
    \int dU \textrm{OTOC} = \sum_{\sigma,\tau\in S_{2T}}\frac{Wg(\sigma\tau^{-1})}{q} \mathcal{G}(\sigma,\tau),
\end{equation}
where the number of instances of $U$ appearing in the OTOC (sum of the positive powers of $U$ appearing in the OTOC) is $2T$ and where $\mathcal{G}(\sigma,\tau)$ is given by the trace diagram below. The trace is made clear by the repetition of leg labels on the incoming and outgoing legs,
\begin{equation}
	\mathcal{G}(\sigma,\tau)=
	\raisebox{-0.52\totalheight}{
	\includegraphics[height = 4cm]{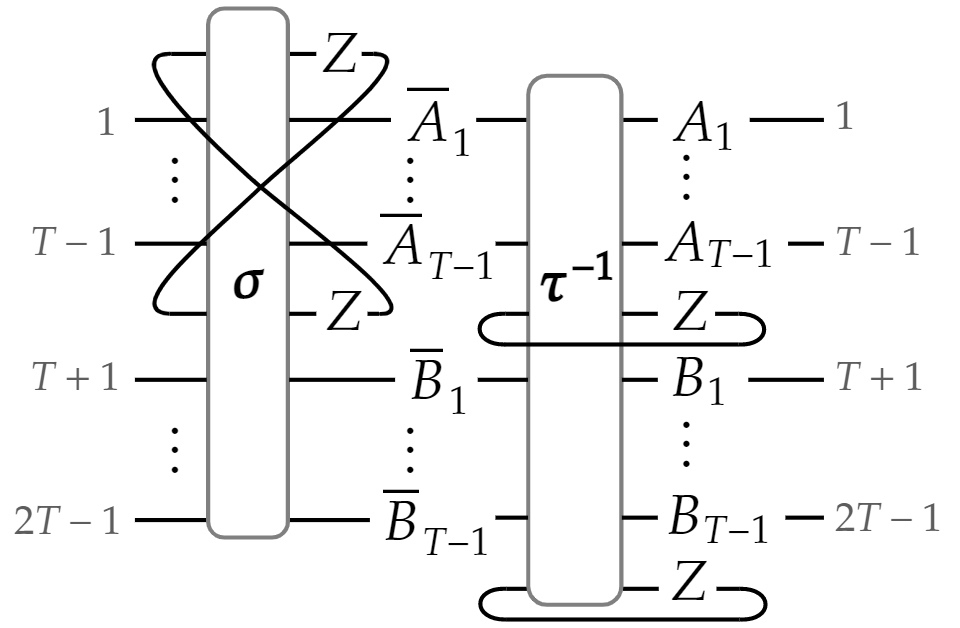}}.
\end{equation}
Bounding $\mathcal{G}(\sigma,\tau)$ takes a little work. Focus on the permutation $\sigma$ and write down the qualitatively distinct options for the first and $T$-th legs.
\begin{figure}[H]
	\centering
	\includegraphics[height = 3cm]{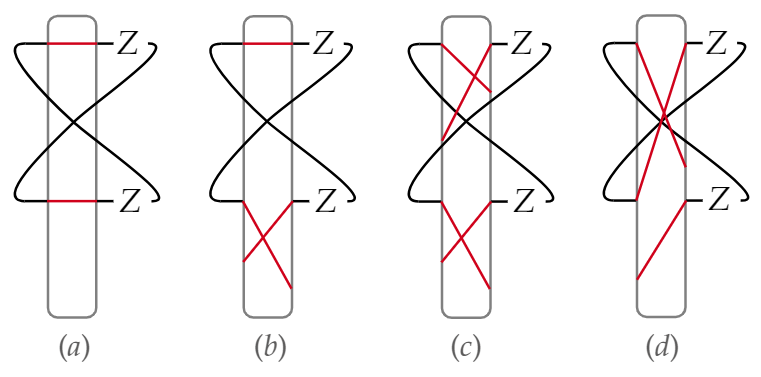}
	\caption{(a) $\sigma$ contains the $1$-cycles $(1)(T)$; (b) $\sigma$ contains a $1$-cycle on only one of the legs, $(1)$ or $(T)$; (c) $\sigma$ doesn't contain a $1$-cycle on either leg $1$ or $T$, neither does it contain a connection $1\to T$ or $T\to 1$; (d) $\sigma$ contains either the connection $1\to T$ or $T\to 1$ or both.\label{wiring_options_sigma_perm}}
\end{figure}
Assuming option (a) in Fig. \ref{wiring_options_sigma_perm}, the closed loop in the $\sigma$ block contributes a factor $q$, the remainder of $\sigma$ is given by some wiring from the $2T-2$ remaining input legs to the $2T-2$ remaining output legs. Likewise, with option (b) the $\sigma$ block is given by some wiring between the $2T-2$ remaining incoming and outgoing legs but this time without the additional closed loop. With option (c), much the same as (b), the $\sigma$ block is given by some wiring between $2T-2$ remaining input and output legs, however two of these output legs pick up a $Z$ decoration. Finally, with option (d), the diagram is zero as this involves taking the trace of a $Z$ decoration. We outline the $\tau$ block counterpart in the figure below.
\begin{figure}[H]
	\centering
	\includegraphics[height = 3cm]{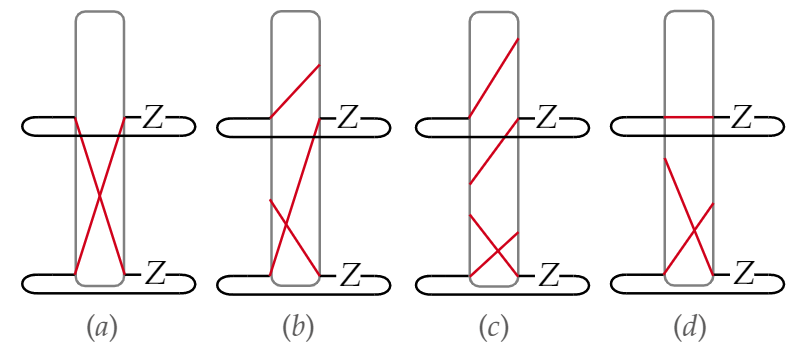}
	\caption{(a) $\sigma$ contains the $2$-cycles $(T,2T)$; (b) $\sigma$ contains the connection $T\to 2T$ or $2T\to T$ but not both; (c) $\sigma$ doesn't contain either connections $T\to 2T$ or $2T\to T$, neither does it contain either of the $1$-cycles $(T)$ or $(2T)$; (d) $\sigma$ contains either or both of the $1$-cycles $(T)$ or $(2T)$.\label{wiring_options_tau_perm}}
\end{figure}
 Assuming option (a) in Fig. \ref{wiring_options_tau_perm}, the $\tau$ block contributes an undecorated loop (a factor of $q$) and some wiring from the $2T-2$ remaining input legs to the $2T-2$ remaining output legs. With option (b), the $\tau$ block just contributes some wiring between the $2T-2$ remaining input and output legs. With option (c), the $\tau$ block also contributes some wiring between the $2T-2$ remaining input and output legs, but two of the output legs receive additional $Z$ decorations. Option (d) involves the trace of a $Z$ decoration and is therefore zero.
 
 The maximum number of loops is found if all the $2T-2$ incoming legs return to their original positions after the $\sigma$ and $\tau$ block wirings. In this case they contribute $2T-2$ loops, if we choose option (a) in both the $\sigma$ and $\tau$ blocks we find two additional loops and have an upper bound of $2T$ loops total, whereas all other options give at most $2T-1$ loops. Let $|\sigma\tau^{-1}|\geq2$, then even with the upper bound of $2T$ loops (a factor $q^{2T}$ from $\mathcal{G}$), the factor $|\frac{Wg}{q}|\sim q^{-2T-1-|\sigma\tau^{-1}|}$ gives a combined contribution of order $\order{q^{-3}}$ if $|\sigma\tau^{-1}|>2$. Therefore the $\order{q^{-2}}$ contributions to the Haar averaged OTOC must come from contributions with $\sigma=\tau$ or $|\sigma\tau^{-1}|=1$. We will study each case separately. This simplifies the OTOC Haar average as shown below,
 \begin{equation}
    \int dU \textrm{OTOC} = \langle OTOC \rangle_{+} - \langle OTOC \rangle_{-} + \order{1/q^3}.
\end{equation}
where
\begin{equation}
    \langle OTOC \rangle_{+} = \frac{1}{q^{2T+1}}\sum_{\sigma\in S_{2T}} \mathcal{G}(\sigma), \quad \langle OTOC \rangle_{-} = \frac{1}{q^{2T+2}}\sum_{\substack{\sigma,\tau\in S_{2T} \\ |\sigma\tau^{-1}|=1}} \mathcal{G}(\sigma,\tau)
\end{equation}
and where we have used the shorthand $\mathcal{G}(\sigma)=\mathcal{G}(\sigma,\sigma)$.
 
 \subsection[\texorpdfstring{tex}{pdfbookmark}]{Contributions with $\sigma=\tau$: $\langle \textrm{OTOC} \rangle_{+}$}

 \begin{equation}
     \langle \textrm{OTOC} \rangle_{+} = \frac{1}{q^{2T+1}}\sum_{\sigma\in S_{2T}} \mathcal{G}(\sigma)
 \end{equation}
 While we already have a diagrammatic representation for $G(\sigma)$, there is a more convenient diagrammatic representation. With $\sigma=\tau$, each instance of a unitary $U$ is paired with a $U^\dagger$ in the following manner:
 \begin{figure}[H]
	\centering
	\includegraphics[height = 2.6cm]{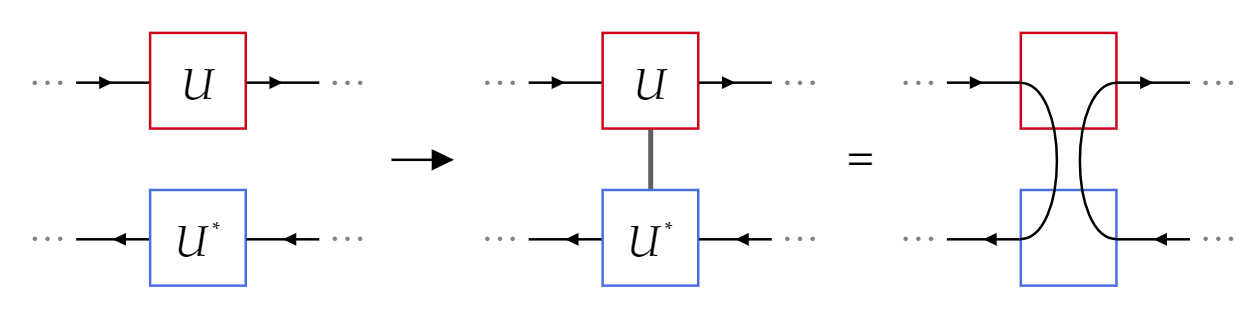}.
	\label{U_Ustar_pairing_rule}
\end{figure}
After choosing a pairing $\sigma$ of unitaries $U$ with $U^\dagger$ (or $U^*$ if the incoming/outgoing legs have been switched in the tensor diagram, see Fig. \ref{transposition convention}), we draw a grey edge between each pair. This grey edge is a shorthand for a pair of edges which identify incoming and outgoing legs. The arrow is just cosmetic and follows the clockwise direction around the OTOC contour, as seen in Fig. \ref{OTOC_as_trace_diagram}.
 \begin{figure}[H]
	\centering
	\includegraphics[height = 1.5cm]{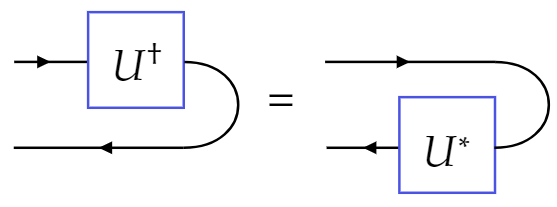}
	\caption{We use the following shorthand for moving operators around a contour, switching the input and output leg order, this introduces a transposition. In this case, changing a $U^\dagger$ into a $U^*$.}\label{transposition convention}
\end{figure}
A shorthand for the $U$-$U^*$ pairing is given below,
\begin{equation}\label{U_Ustar_pairing_rule_shorthand}
	\centering
	\includegraphics[height = 1.2cm]{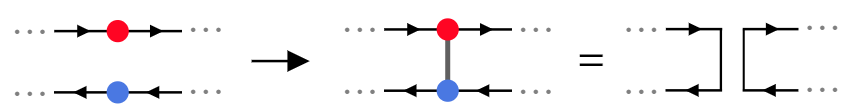}.
\end{equation}
We proceed by considering a generic physical OTOC and stretching OTO contour out into a ring,
\begin{figure}[H]
	\centering
	\includegraphics[height = 2.2cm]{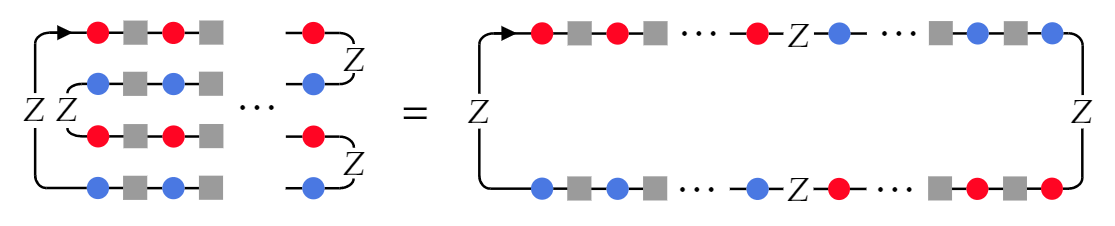}.
	\caption{A physical OTOC, the red dots represent the unitary $U$ and blue dots $U^*$ (or $U^\dagger$ depending on whether they have been brought around a bend). The grey boxes each represent either an identity operator or a $Z$ operator.}\label{OTOC_as_trace_diagram}
\end{figure}
We find that this ring is divided into quarter arcs, with the first and third being carrying the $U$'s (red dots) and the second and fourth carrying the $U^\dagger$'s (blue dots). Each arc contains the same number of coloured dots, $T$. We simplify this further by suppressing every grey box (each of which can be a $\mathbb{1}$ or $Z$). 
\begin{figure}[H]\label{OTOC_pairing_diagram}
	\centering
	\includegraphics[height = 5cm]{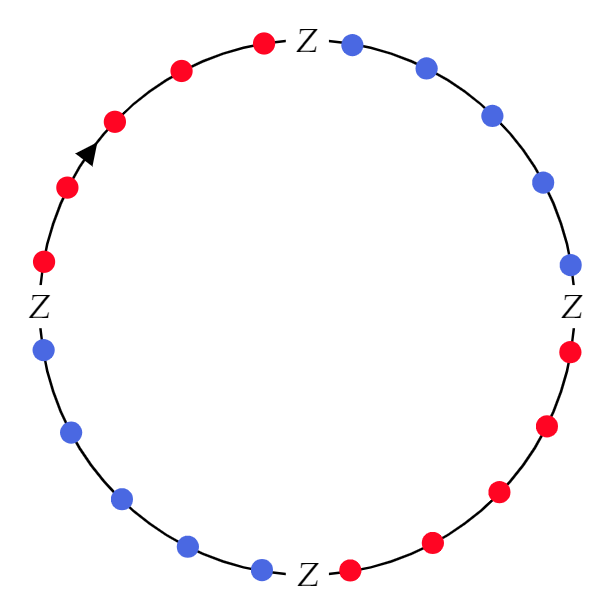}
	\caption{A physical OTOC. We have suppressed every non $U$ or $U^\dagger$ operator except for the paired $Z$ decorations at the beginning and end of every physical OTOC. The arrow signals the clockwise direction around the OTOC contour.}
\end{figure}
 A contribution $\mathcal{G}(\sigma)$ to $\langle OTOC \rangle_{+}$ is then simply a choice of pairings such as the one given below, where we have introduced an additional shorthand, that simply colours the arcs red or blue (to represent which arc contains $U$'s or $U^\dagger$'s). The positions of the unitaries are given by the vertices of the internal grey edges with the boundary ring.
 \begin{figure}[H]\label{OTOC_with_crossings}
	\centering
	\includegraphics[height = 5cm]{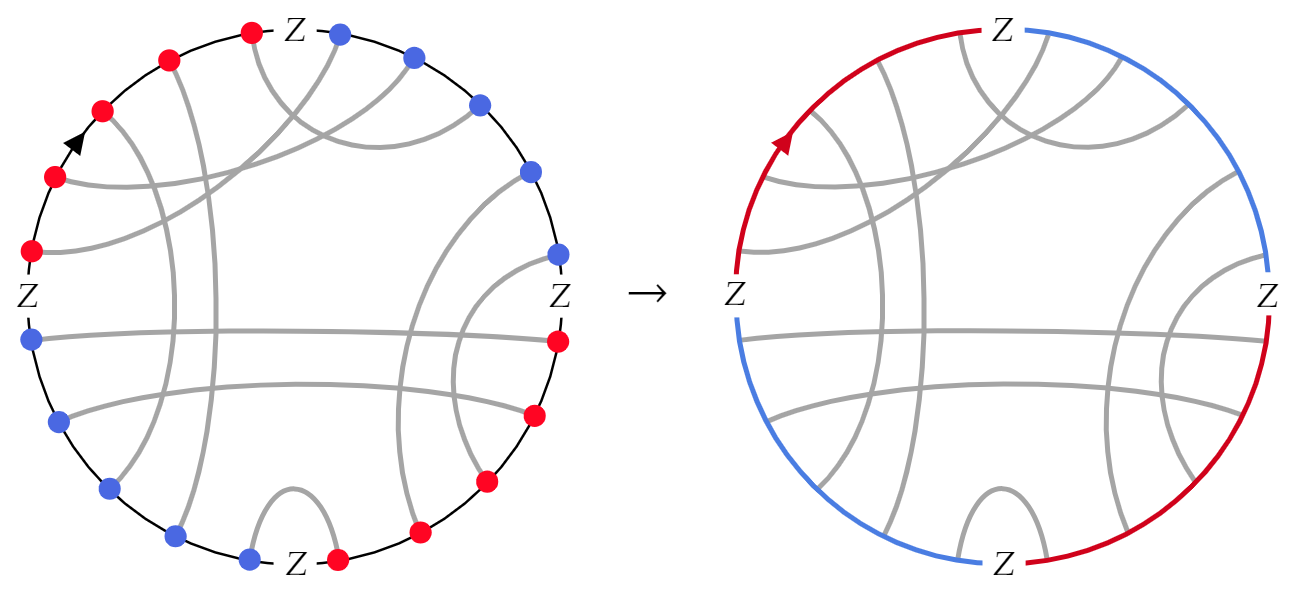}
	\caption{Example of a choice of $U$-$U^\dagger$ pairings $\sigma$. We also introduce another shorthand which drops the $U$ and $U^\dagger$ nodes and makes them implicit at the vertices of an interior edge and the perimeter edge, which is now coloured depending on whether the vertices are $U$'s or $U^\dagger$'s.}
\end{figure}
In the remainder of this section we will argue that minimising the number of internal edge crossings  optimises the number of loops, thus maximising the contribution of the diagram. To do this, let us abstract these diagrams further and consider a diagram that is only a ring with interior edges between points on the ring and with no restrictions on the positions of the vertices, such as the restriction above that edges must connect arcs with different colours. We also suppress the $Z$ decorations as these are not important in the simple task of counting the number of loops there are in each diagram. We will refer to these abstracted diagrams as \textit{cobweb diagrams}. An example of one of these cobweb diagram is given below
 \begin{figure}[H]\label{abstract_graph}
	\centering
	\includegraphics[height = 3cm]{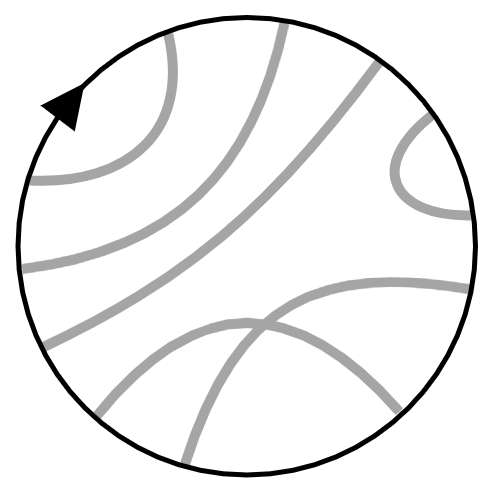}
	\caption{A cobweb diagram with interior edges that connect two points on the perimeter ring. This diagram has $E=6$ internal edges, $V=12$ vertices and $C=1$ crossing.}
\end{figure}
The evaluation of these diagrams is done with an abstracted version of Eq. \ref{U_Ustar_pairing_rule_shorthand}, which tells us how to interpret interior edges,
\begin{equation}\label{abstracted_rule}
    \raisebox{-0.45\totalheight}{\includegraphics[height = 1.2cm]{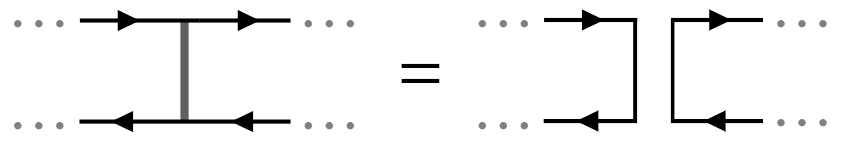}}.
\end{equation}
Using this rule, each cobweb diagram reduces to a number of loops, that we will refer to as index loops, each of which is associated with a factor $q$. We aim to find bounds on the number of index loops a cobweb diagram can have, depending on properties of the diagram such as the presence of crossings of interior edges. Using Eq. \ref{abstracted_rule}, we can identify two simple ways of extracting index loops from a cobweb diagram,
\begin{align}\label{abstracted_rules}
	\centering
	&\text{\textbf{Rule 1 (parallel-edge rule)}: } \raisebox{-0.45\totalheight}{\includegraphics[height = 1.8cm]{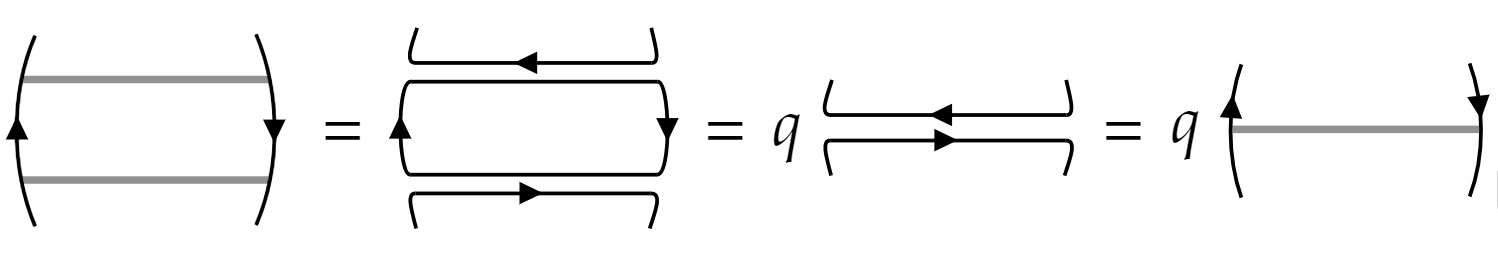}}\\
	&\text{\textbf{Rule 2 (edge-bubble rule)}: }  \raisebox{-0.45\totalheight}{\includegraphics[height = 1.6cm]{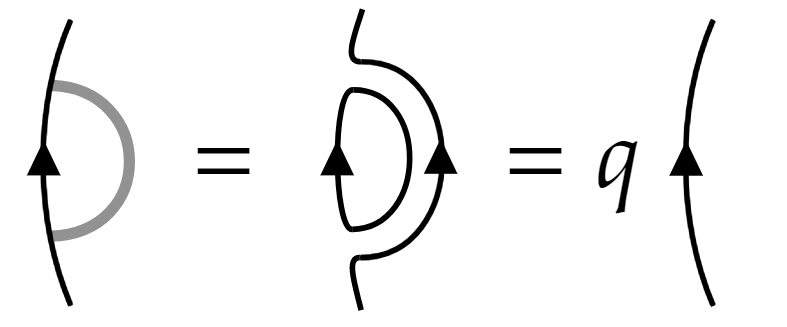}}.
\end{align}
We refer to these rules as the \textit{parallel-edge} rule and the \textit{edge-bubble} rule respectively. Each time one of these rules is applied, an index loop (or factor $q$) is extracted from the diagram. Cobweb diagrams without any edge crossings (planar cobweb diagrams) are fully reducible to the perimeter ring with no internal edges (the \textit{empty} cobweb diagram) through these rules alone. The empty cobweb is simply a single index loop and contributes one additional factor of $q$. With $E$ being the number of internal edges, planar cobweb diagram have value $q^{E+1}$ (i.e $E+1$ index loops).

However, one cannot completely evaluate all cobweb diagrams using these rules alone, the rules do not tell us how to deal with edge crossings. Rather than give a complete algorithm for determining the number of loops from a cobweb diagram, we will produce some useful bounds. 

Given some cobweb diagram $G$ with $E$ internal edges, we exhaustively apply the parallel-edge and edge-bubble rules until we arrive at what we call a \textit{reduced} cobweb diagram, where neither rule can be applied any further. Let $E_p$ be the total number of edges removed through applications of these rules. If the $E_p=E$, then the diagram has had every edge removed and we arrive at the empty cobweb diagram. Otherwise the reduced diagram has $E'=E-E_p$ internal edges;
all of which cross at least one other edge.

It will be helpful to enumerate some of ways of forming an index loop. We will focus on short index loops, i.e index loops containing only a small number of boundary edge segments. A boundary edge segment is the segment of the boundary ring bordered at each end by adjacent vertices, the total number of boundary segments is twice the number of internal edges. To form an index loop with only one boundary edge segment, start by picking a boundary edge segment. Then, to form an index loop which does not contain further boundary segments, one must draw an internal edge between the vertices connected by the boundary segment. These index loops are precisely those removed by the edge-bubble rule.

To form an index loop which contains two boundary segments, start by choosing two segments. If the boundary edges share a vertex (i.e are adjacent) it is not possible to form an index loop that contains only these two segments. If they are separated, then there are two ways to connect up these vertices (without forming the one-boundary-segment loops already considered previously), these are shown below:
\begin{enumerate}
    \item non-crossing edges: \raisebox{-0.41\totalheight}{\includegraphics[height = 1.6cm]{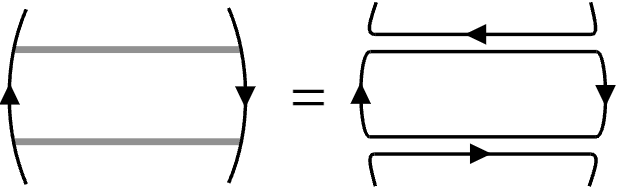}}
    
    Here we find an index loop that contains only two boundary segments.
    \item crossed edges: \raisebox{-0.41\totalheight}{\includegraphics[height = 1.6cm]{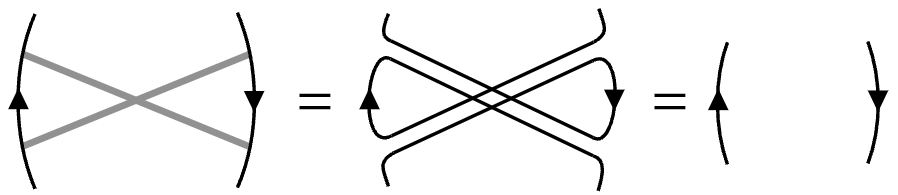}}
    
    We do not extract an index loop when the internal edges are crossed. In fact, the edges can be deleted without any change to the value of the diagram. 
\end{enumerate}
Therefore, index loops that involve exactly two boundary segments (trace a path through exactly two boundary segments) are the result of a pair of non-crossing edges of the form (1) above and are handled by the parallel-edge rule. While index loops passing through only one boundary segment are the result of edge-bubbles only and are dealt with using the edge-bubble rule. All other index loops must trace a path through three or more boundary segments. 

We can use this information to bound the number of index loops in a reduced diagram. We have just seen that the parallel-edge rule and the edge-bubble rule remove all the one-boundary-segment and two-boundary-segment index loops from the diagram. The subsequent reduced diagram therefore contains index loops each of which contain three or more boundary segments. The number of internal edges in a reduced diagram is $E'$, the number of vertices is $V'=2E'$, the number of boundary segments is $B'=V'=2E'$. As each boundary segment is part of only one index loop, then for $E'\geq 2$ ($E'=1$ is no possible), the number of index loops is bounded by $N'\leq B'/3=2E'/3$. For $E'=0$, the number of index loops is given by $N'=1$. We can combine these cases into the single inequality
\begin{equation}
   N'\leq \max(1,2E'/3) 
\end{equation}
The number, $N$, of index loops in the fully diagram is then bounded by 
\begin{itemize}\label{bounds}
    \item planar cobweb diagrams ($E'=0$): $N = E+1$
    \item non-planar cobweb diagrams ($E'\geq 2$): $N\leq E_p+\frac{2}{3}(E-E_p)=E-E'/3$
\end{itemize}
where $E$ is the total number of internal edges and $E_p=E-E'$ is the number of times the parallel-edge and edge-bubble rules were applied to arrive at the reduced diagram. A reduced diagram must have at least one crossing so $E'\geq 2$. The $E'=2$ reduced diagram is unique and contains a single index loop and therefore $N=(E-2)+1=E-1$. With $E'=3$, $N$ is bounded by $N\leq E - 1$. For $E\geq 4$, the number of loops is bounded by $N\leq E-2$ (where we have used the fact that $N$ must be an integer). Therefore, combining the factor $\textrm{Wg}(\sigma)/q$, $E'\geq 4$ cobweb diagrams contribute at order $\order{1/q^3}$ or smaller. We will now study planar cobweb diagrams, $E'=2$ and $E'=3$ cobweb diagrams in detail.

\subsubsection{Planar diagrams}

In this section, we redecorate planar cobweb diagrams with the red and blue contour colouring and also the four compulsory $Z$ decorations at the interfaces of each coloured contour. By doing this, we investigate the contribution of the planar cobweb diagrams to $\langle \textrm{OTOC} \rangle_{+}$. The important distinction of the diagrams with $U$ (red) and $U^*$ (blue) contours is that there is an added condition that edges must pair points on contours of differing colour.

It is not possible to draw a planar diagram without a bubble edge, this bubble edge must either: (1) straddle a $Z$ decoration at the interface of a red and blue contour, and hence take the trace $\Tr(Z)=0$ as shown in Fig. \ref{NN_edge_straddling_Z}; or (2) connect adjacent vertices within a single contour, however, this is not permitted as edges must connect vertices in contours of different colour.

\begin{figure}[H]\label{NN_edge_straddling_Z}
	\centering
	\includegraphics[height = 2cm]{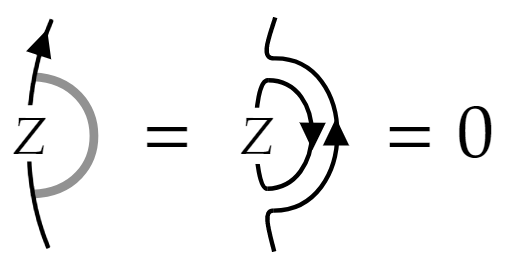}
\end{figure}

Therefore, no planar cobweb diagrams contributes to the physical OTOCs.

\subsubsection[\texorpdfstring{tex}{pdfbookmark}]{$E'=3$ cobweb diagrams}
The only reduced diagram with $E'=3$ edges is shown below,
 \begin{figure}[H]\label{6_vertex_loop_special_case}
	\centering
	\includegraphics[height = 2.5cm]{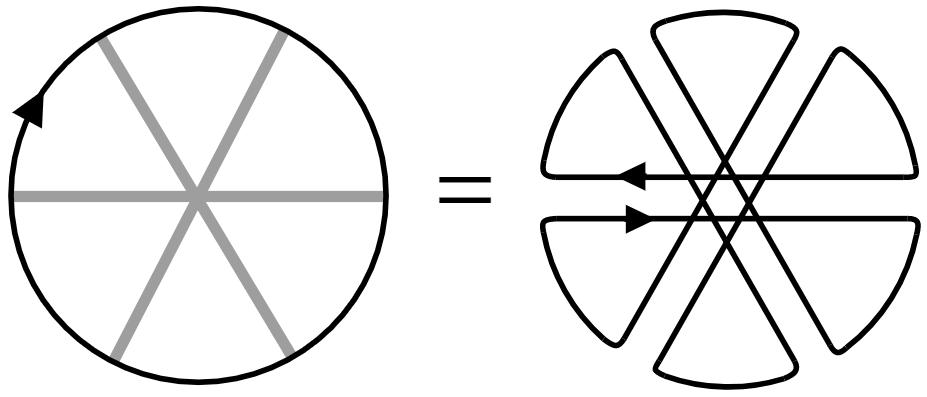}
\end{figure}
We can apply the parallel-edge and edge-bubble rules in reverse in order add back in edges and construct every diagram that reduced to this $E'=3$ diagram. We do this to build a diagram $G$, but keep the three initial edges highlighted in our minds. We then reintroduce the $Z$ decorations and the coloured $U$ and $U^\dagger$ contours and find that there are two qualitatively different types of $E'=3$ diagram. In the first, only two contours are connected by the highlighted edges, in the second, three contours are connected by the highlighted edges. Below we show these two cases but only display the highlighted edges in order to de-clutter the diagrams.
 \begin{figure}[H]
	\centering
	\includegraphics[height = 3.4cm]{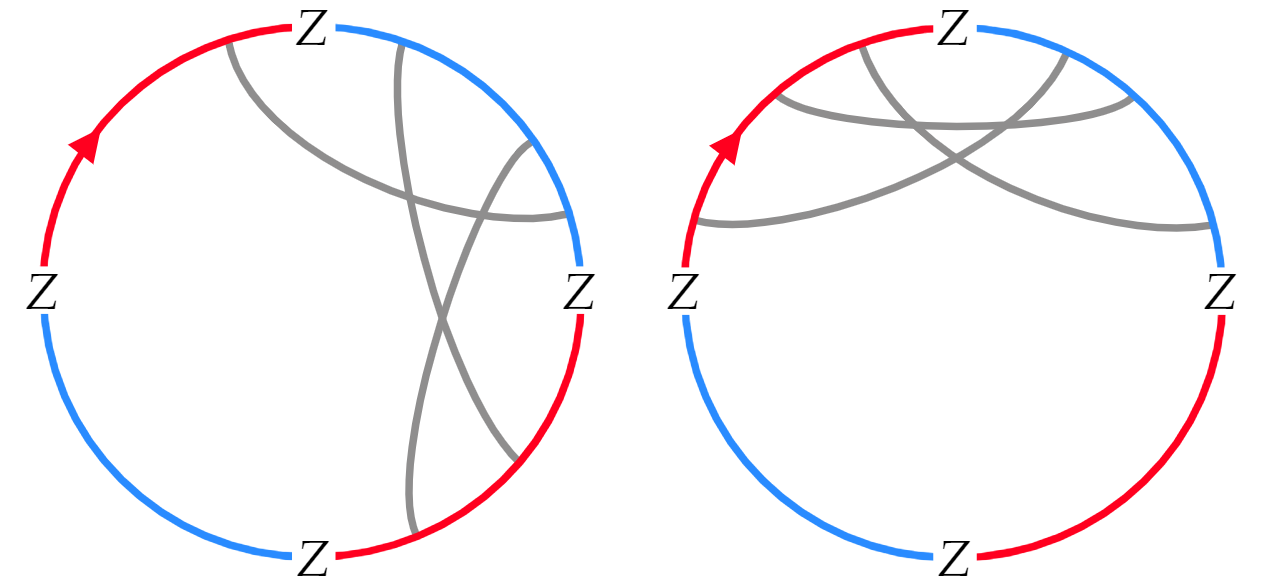}
	\caption{The two qualitatively different $E'=3$ diagram. (Left) Three contours are connected by the highlighted edges. (Right) Only two contours are connected by the highlighted edges.\label{6_vertex_OTOC}}
\end{figure}
The edges that have not been displayed must either be parallel to another edge (in the sense used in the parallel edge rule) or be an edge-bubble. The edge-bubble option, as in the case of the planar diagrams, will lead to the trace of a $Z$ decoration and is therefore not allowed. In the first case (on the left of Fig. \ref{6_vertex_OTOC}), it is not possible to add edges connecting to lower left blue contour to one of the red contours using only the reversed parallel-edge. Therefore, the lower left blue contour must have no vertices in the full diagram $G$. However, another condition of the coloured diagrams is that each must have the same number of vertices, this excludes the diagram $G$ just discussed. Next, consider the second case (on the right of Fig. \ref{6_vertex_OTOC}), it is not possible to add edges that connect to either of the lower two contours using only the reversed parallel-edge rule. Using again, the fact that each contour must have the same number of vertices, we are forced to dismiss this second type of $E'=3$ diagram. Therefore, there are no $E'=3$ diagrams that contribute to the Haar averaged OTOC. This leaves only the $E'=2$ contributions to consider.

\subsubsection[\texorpdfstring{tex}{pdfbookmark}]{$E'=2$ circle graphs}
The only $E'=2$ reduced diagram is given below.
    \begin{equation}\label{special case}
       \raisebox{-0.46\totalheight}{\includegraphics[height = 2cm]{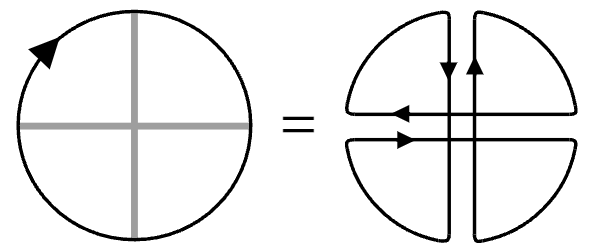}} 
    \end{equation}
As before, by using the parallel-edge and edge-bubble rules in reverse, we can build every cobweb diagram that reduces to a $E'=2$ reduced diagram. Doing this we build a diagram $G$, keeping the initial two edges highlighted in our minds before reintroducing the coloured contours and the $Z$ decorations at each of the four interfaces. This gives two qualitatively distinct types of diagram which are shown below (where only the highlighted edges are drawn).

 \begin{figure}[H]
	\centering
	\includegraphics[height = 3.4cm]{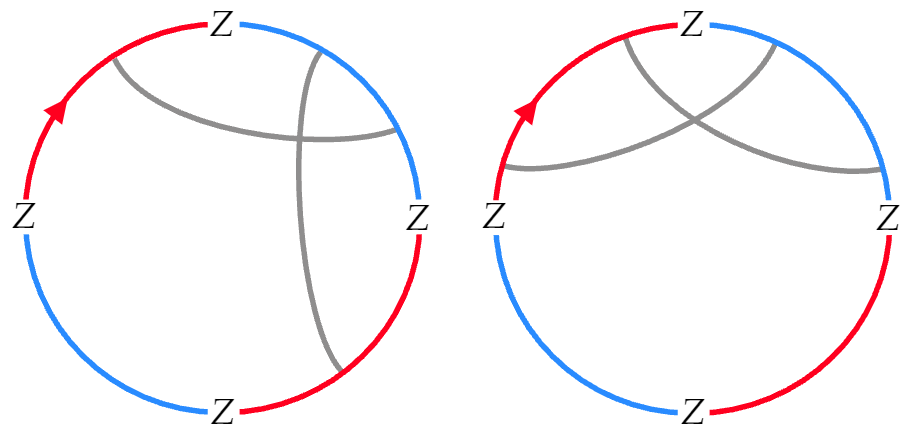}
	\caption{The two qualitatively different $E'=2$ diagram. (Left) Three contours are connected by the highlighted edges. (Right) Only two contours are connected by the highlighted edges.}\label{Eis2diagrams}
\end{figure}
As in the case of $E'=3$ diagrams, any edges added using the edge-bubble rule in reverse would have to straddle a $Z$ decoration and therefore take the trace $\Tr(Z)=0$. Consider the diagram on the right of Fig. \ref{Eis2diagrams}, no edge can be added that connects to either of the lower two contours using the reversed parallel-edge rule. Therefore, by requiring each contour has the same number of vertices, this type of $E'=2$ diagram is excluded. This leaves only the diagram on the left of Fig. \ref{Eis2diagrams}. Adding edges using the parallel-edge rule in reverse generates a family of diagrams shown below, parameterised by four numbers $E_{1,\overline{2}}$, $E_{2,\overline{1}}$, $E_{1,\overline{1}}$ and $E_{1,\overline{1}}$, where $E_{i,\overline{j}}$ counts the number of edges connecting contour $i$ with $\overline{j}$.
 \begin{figure}[H]\label{waffle_OTOC}
	\centering
	\includegraphics[height = 4.5cm]{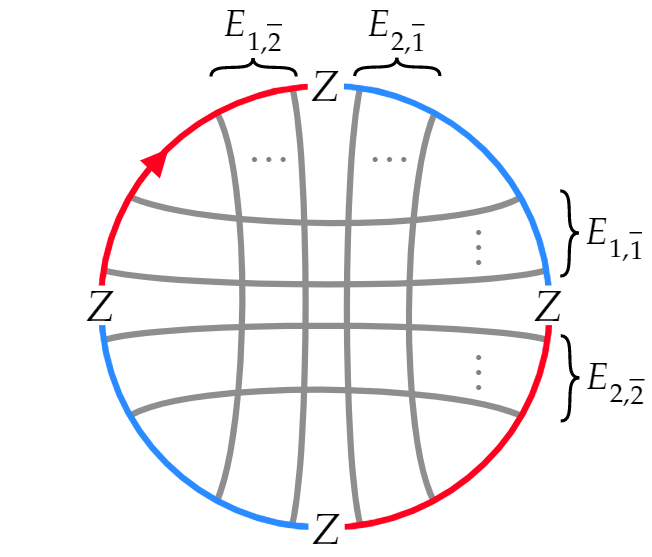}
\end{figure}
Using once more, the fact that the diagram must have equal number of vertices on all of four of the arcs, $V_1=V_{\overline{1}}=V_2=V_{\overline{2}}$, we find the condition $E_{1,\overline{1}}+E_{1,\overline{2}}=E_{1,\overline{1}}+E_{2,\overline{1}}=E_{2,\overline{1}}+E_{2,\overline{2}}=E_{1,\overline{2}}+E_{2,\overline{2}}$ in the diagram above. This solved by $E_{1,\overline{1}}=E_{2,\overline{2}}$ and $E_{1,\overline{2}}=E_{2,\overline{1}}$. Therefore, the $E'=2$ diagrams contributing to the averaged OTOC are in fact parameterised only by the total number of edges $E=2T$ and by $N_{-}$, the number of edges connecting arc $1$ to $\overline{1}$, where $1\leq N_{-}<T$. Each of these diagrams with $E=2T$ total edges has $E-1$ loops, combining the accompanying factor $\frac{1}{q^{2T+1}}$ this gives $\order{1/q^2}$ contributions to $\langle \textrm{OTOC}\rangle_+$.

Folding one of these diagrams back up into the OTOC contour reveals a useful correspondence.
 \begin{equation}\label{OTOC_with_projectors}
	\langle \textrm{OTOC} \rangle_{+} = \sum_{N_-=0}^{T} \frac{1}{q^2}
	\raisebox{-0.64\totalheight}{\includegraphics[height = 3cm]{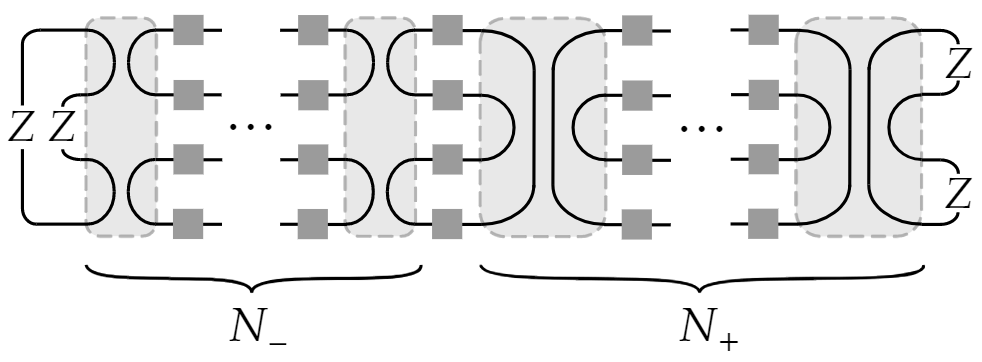}}
\end{equation}
where $N_{+} + N_{-}=T$ in the case of physical OTOCs with end time $T$. We have been a little sloppy with the factors of $q$ in this picture, this is easily resolved by associating with each closed loop a factor of $1/q$ to keep each loop normalised. Therefore, the $\order{q^{-2}}$ contributions to $\langle \textrm{OTOC} \rangle_+$ are found simply inserting the projector $K_-=\ket{-}\bra{-}$ for $N_{-}$ layers and then the projector $K_+=\ket{+}\bra{+}$ for the remaining layers and summing over $N_-$ from $N_-=1$ to $T-1$.
\subsection[\texorpdfstring{tex}{pdfbookmark}]{Contributions with $|\sigma\tau^{-1}|=1$: $\langle OTOC \rangle_{-}$}
\begin{equation}
     \langle \textrm{OTOC} \rangle_{-} = \frac{1}{q^{N+2}}\sum_{\substack{\sigma,\tau\in S_N \\ |\sigma\tau^{-1}|=1}} \mathcal{G}(\sigma,\tau)
\end{equation}
As in the $\sigma=\tau$ case, we can stretch the OTOC contour into a circle and then draw edges between $U$'s and $U^\dagger$'s. However, this time, while almost every $U$ is paired with a $U^\dagger$ as before, now two of the $U$'s and two of the $U^\dagger$'s a grouped in a cycle $U_i\to U^\dagger_j \to U_k \to U^\dagger_l\to U_i$. This means that there are two kinds of internal edges: (1) the grey edges used previously which, because they carry two indices, we now also refer to them as \textit{doubled edges}; (2) the new edge which we call a \textit{single-index edges} as they carry only one index each, these are represented as thin black edges in the interior of the boundary ring. The rules for the single-index edges are shown below, where the curved edges are boundary segments (part of the boundary ring).
\begin{figure}[H]\label{directed_edge_rule}
	\centering
	\includegraphics[height = 2cm]{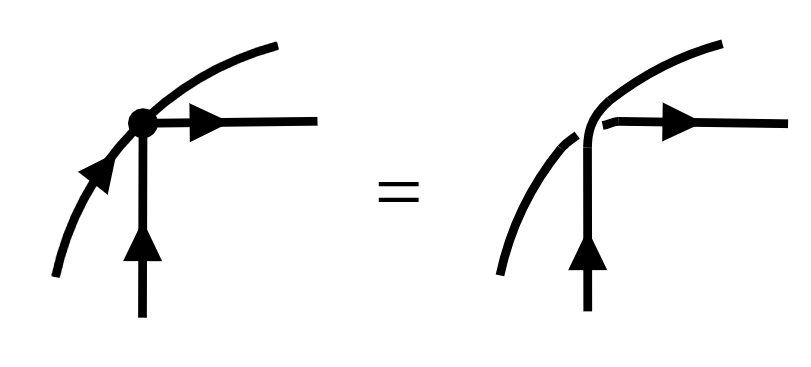}
\end{figure}
Before we consider the consequences for the diagrams with the four coloured contours (two red and two blue) and $Z$ decorations, we will consider abstracted diagrams as we did before. These abstracted diagrams are the same as before, with the addition of a cycle of four single-index edges. There are four qualitatively different realisations of a cycle of four single-index edges. In each case, the diagram fragments into sub-diagrams, possibly connected by grey (doubled) edges. Individually, these sub-diagrams are identical to the cobweb diagrams seen in the previous section. However, unlike in the previous section and because there are now multiple cobwebs at once, they can be connected by grey edges. Below we shown an example for each of the four qualitatively different ways of fragmentation. We will choose a convention in which all edges belonging to a single sub-diagram are drawn entirely within the sub-diagram's boundary ring, and all edges connected different sub-diagrams are drawn such that they never cross the interior of a sub-diagram. The usefulness of this convention is shown in section \ref{convention}

\vspace{1mm}

\begin{enumerate}
    \item
	\raisebox{-0.45\totalheight}{\includegraphics[height = 2.5cm]{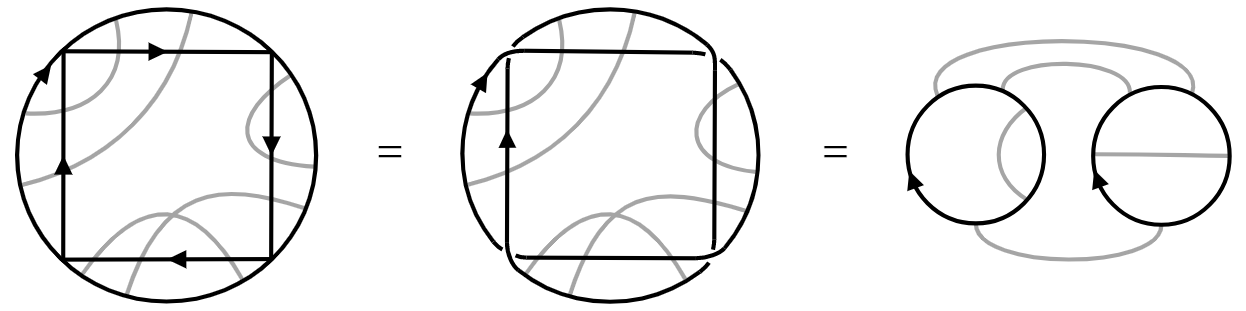}}
	\item
	\raisebox{-0.45\totalheight}{\includegraphics[height = 2.5cm]{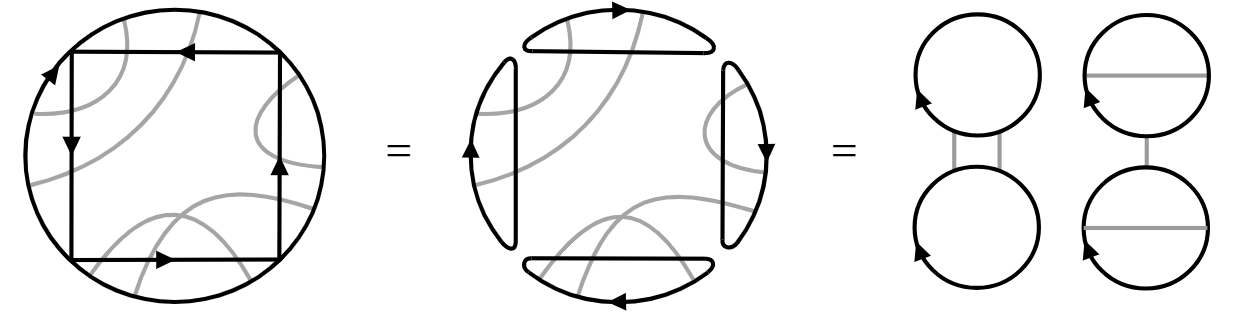}}
	\item \
	\raisebox{-0.45\totalheight}{\includegraphics[height = 2.5cm]{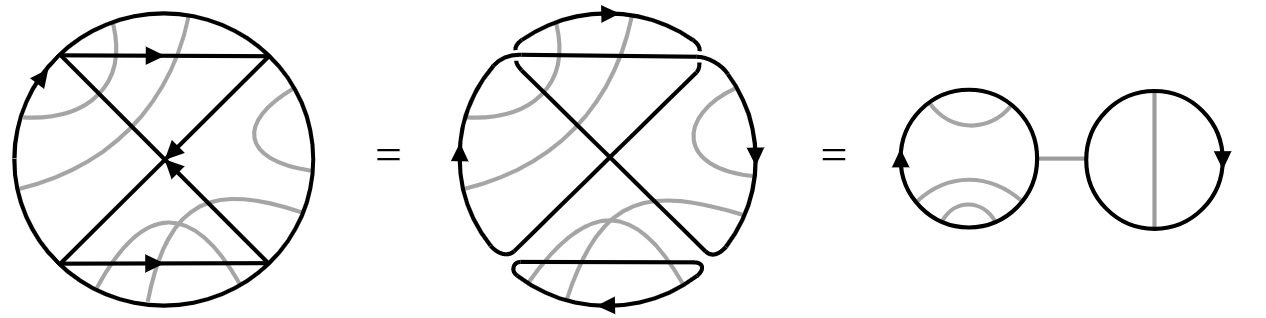}}
	\item \
	\raisebox{-0.45\totalheight}{\includegraphics[height = 2.5cm]{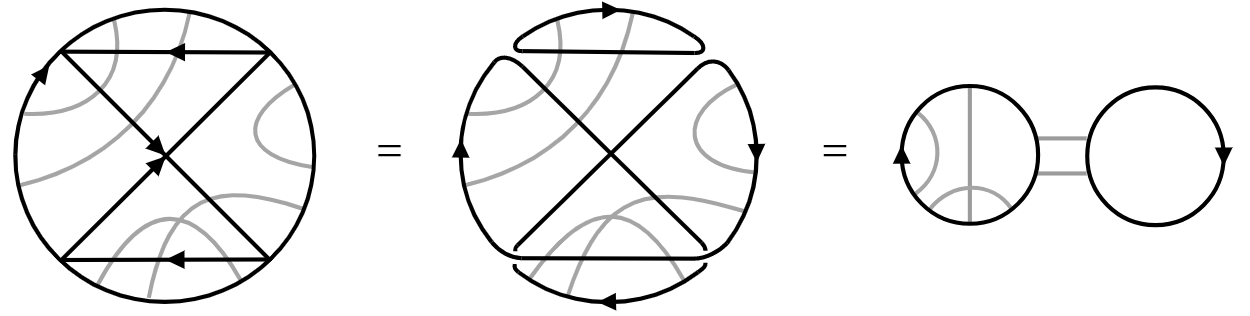}}
\end{enumerate}
As before, we will attempt to identify index loops that contain as few boundary segments as possible. The analysis is almost identical to that of the previous section. All index loops that contain only one boundary segment must be those associated with the bubble-edge rule. All index loops that contain only two boundary segments are those associated with the parallel-edge rule, except for a special case that we will discuss in a moment. The proof of the two-boundary-segment index loop case is as follows: If the boundary segments belong to the same sub-diagram then the argument in the previous section applies. If they belong to different sub-diagrams then the vertices can be connected in only three ways (excluding pairings into one-boundary-segment index loops):
\begin{enumerate}
    \item non-crossing edges: \raisebox{-0.41\totalheight}{\includegraphics[height = 1.4cm]{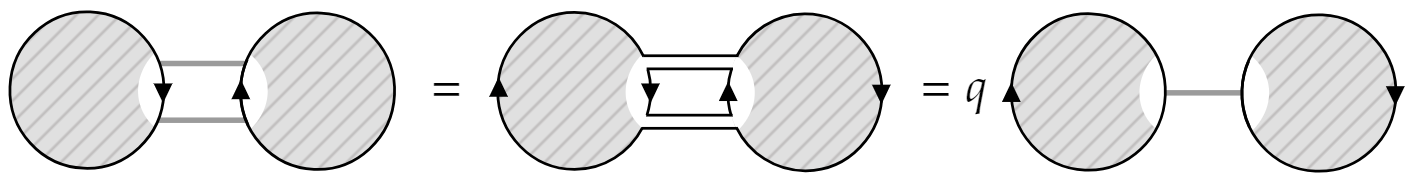}}
    
    Here we have found and eliminated (at the expense of a factor of $q$) an index looping involving two boundary segments on different sub-diagrams. This is in fact just an application of the parallel-edge rule.
    \item crossed edges: \raisebox{-0.41\totalheight}{\includegraphics[height = 1.4cm]{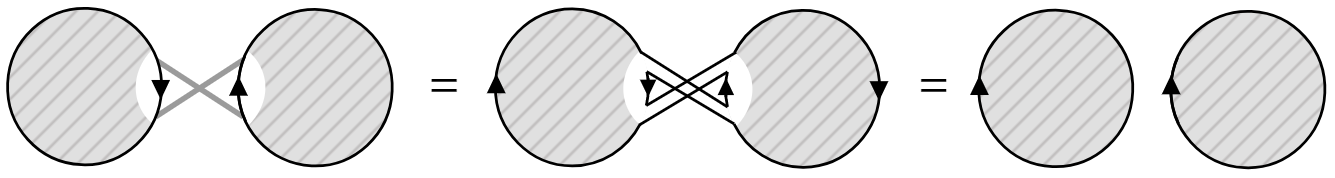}}
    
    Here we have not found an index loop. Instead we have managed to delete the crossed edges without accumulating any factors of $q$.
    
    \item  Special case: in this case we consider only a single edge between two sub-diagrams. If both of the sub-diagrams are otherwise empty, the diagram is equivalent to a single index loop, this index loops involves only two boundary segments. This is shown below.
    \begin{center}
    \raisebox{-0.41\totalheight}{\includegraphics[height = 1.4cm]{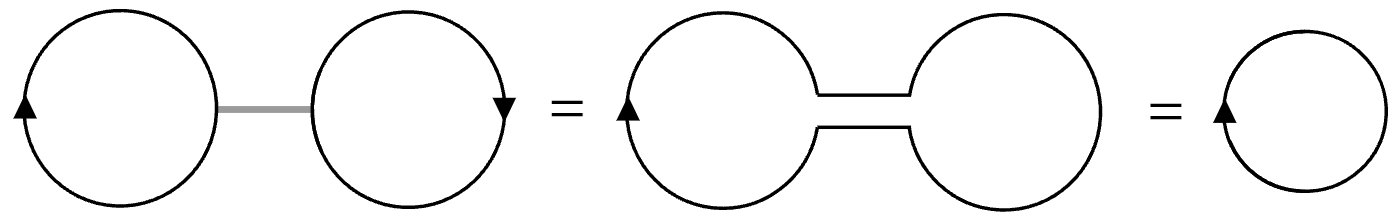}}
    \end{center}

\end{enumerate}
Therefore, two-boundary-segment index loops are either extracted using the parallel-edge rule or is found when two otherwise empty sub-diagrams are connected by a single grey edge.

\begin{tcolorbox}[colback=white]
\subsubsection{Why the convention?}\label{convention}
We have adopted a convention in which edges between sub-diagrams cannot cross the interior of any sub-diagram. If we didn't follow this convention when drawing diagrams, we would run into the following situations: (1) a pair of edges may appear to be crossing and yet the parallel-edge rule still applies, this is to do with the fact that despite the edges appearing crossed, the orientation of the arrows on the boundary edges differs from that shown in the parallel-edge rules; (2) a pair of edges that appear to be parallel are in fact dealt with by the edge crossing rule. Below we show examples of this.
\begin{figure}[H]\label{fragmented_apparent_crossing}
	\centering
	\includegraphics[height = 1.6cm]{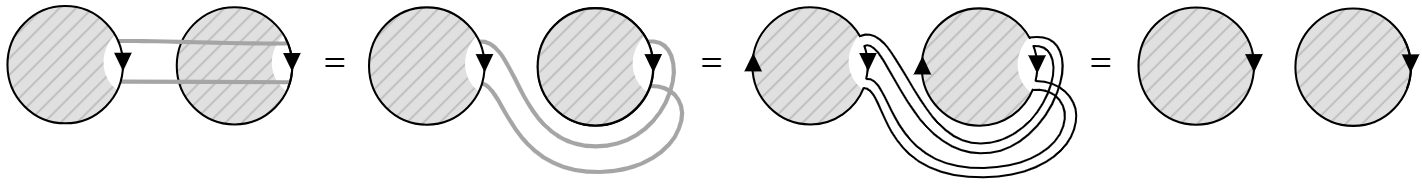}
	\\
	\centering
	\includegraphics[height = 1.6cm]{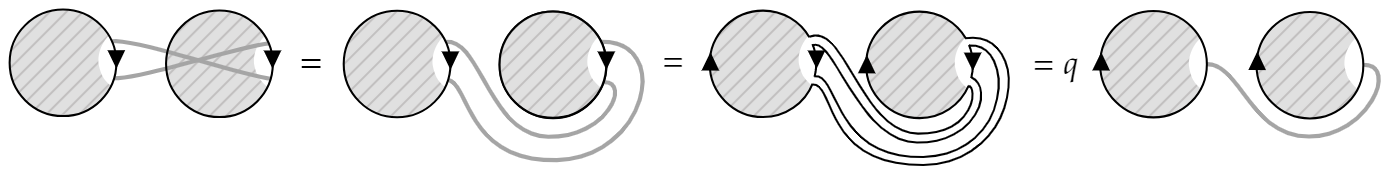}
\end{figure}
Any edges that appear parallel when drawn according to our convention can be treated with the parallel edge rule with no complications or subtleties.
\end{tcolorbox}

We can now reduce a diagram $G$ with $N_S$ sub-diagrams by using the parallel-edge rule and edge-bubble rule. The diagram $G'$ we arrive at after exhaustively applying the rules is again referred to as reduced diagram. We will use this strategy to find bounds on the number of index loops for an initial diagram $G$ consisting of $N_S=2$ or $4$ sub-diagrams, $E_i$ internal edges that belong to the $i$-th sub-diagram, and $E_{i,j}$ edges between sub-diagrams $i$ and $j$. The reduced diagram will have $E_i'$ edges belonging to $i$-th sub-diagram and $E'_{i,j}$ edges between sub-diagrams $i$ and $j$. Each application of the parallel-edge rule and the edge-bubble rule reduces the number of edges in the diagram by one. Therefore, $E'=E-E_p$, where $E=\sum_i E_i + \sum_{i,j}E_{i,j}$, $E'=\sum_i E'_i + \sum_{i,j}E'_{i,j}$ and $E_p$ is the number of times an edge was removed by the rules. An important thing to note is that the rules cannot remove all edges between two sub-diagrams, unless they initially shared no edges. This is summarised below. 
\begin{equation}
E'_{i,j} \geq 1 \textrm{ iff } E_{i,j}\geq 1, \quad
E'_{i,j} = 0 \textrm{ iff } E_{i,j}=0
\end{equation}
Let $M$ be the number of times the special case configuration appears in $G'$ and $N'_{\textrm{planar}}$ the number of disconnected planar sub-diagrams in $G'$. We are able to bound the number of index loops in $G'$ as follows: each disconnected planar sub-diagram in $G'$ is fully reduced (i.e the empty cobweb diagram) and therefore contributes a single index loop; each special case configuration contributes a single index loop; every remaining index loop in $G'$ contains at least three boundary segments, allowing us to bound the number of these index loops from above by $2(E'-M)/3$, where $E-M$ is the number of edges left after removing the special case configuration. This gives the bound

\begin{equation}\label{reduced bound}
    N'\leq N'_{\textrm{planar}}+ M + 2(E'-M)/3 = N'_{\textrm{planar}} + 2E'/3 +M/3,
\end{equation}
where $N'_{\textrm{planar}}+M/2\leq N_S$.
The total number of index loops for a diagram $G$ (that reduces to $G'$) is then given by
\begin{equation}\label{general bound}
    N\leq E-(E'-M)/3 + N_{\textrm{planar}},
\end{equation}
where we have uses the fact that the number of disconnected planar sub-diagrams is conserved through the reduction process to say
$N_{\textrm{planar}}=N'_{\textrm{planar}}$. We remind the reader that the Weingarten factor is given by $\textrm{Wg}(\sigma\tau^{-1}) \approx -q^{-2T-1}=-q^{-(E+2)-1}$, where because two of the $2T$ instances of $U$ have been already allocated single-index edges, there are only $E=2T-2$ grey (doubled) edges. The contribution made from a diagram with $N$ index loops has the size $\abs{\frac{\textrm{Wg}(\sigma\tau^{-1})}{q}q^N}\approx q^{N-E-4}$. In order for the contribution to be $\order{1/q^2}$ or larger, the number of loops must satisfy $N-E\geq 2$. Comparing this to the inequality Eq. \ref{general bound}, we see that
\begin{equation}
    \frac{M-E'}{3}+N_{\textrm{planar}}<2 \implies \textrm{ contributes at $\order{1/q^3}$ or smaller.}
\end{equation}
Importantly $M-E'\leq 0$ and $N_{\textrm{planar}}\leq N_S$. Therefore, for $N_S=2$, a contribution can only be relevant if $N_{\textrm{planar}}=2$, this accounts for both sub-diagrams and so $M=0$. Therefore, the only $N_S=2$ contributions at $\order{1/q^2}$ are those where the sub-diagrams are disconnected and planar. For $N_S=4$, the situation is not so tightly constrained. Due to the fact that $M-E'\leq0$ we can see that we need $2\leq N_{\textrm{planar}}\leq 4$. For $N_{\textrm{planar}}=2$ we require $M=E'$ and we only have two sub-diagrams to use, therefore $M\leq 1$. If $M=E'=0$, these sub-diagrams must be planar which contradicts the assumption that $N_{\textrm{planar}}=2$, therefore we must have $M=E'=1$. For $N_{\textrm{planar}}=3$ we are left with a single disconnected sub-diagram that, by assumption, is not planar, i.e $E'\geq 2$ and $M=0$. Finally, for $N_{\textrm{planar}}=4$, we obviously have $M=E'=0$.

For $N_S=4$, there are 3 configurations to consider: $(N_{\textrm{planar}},E',M)=(2,1,1)$, $(3,2,0)$ and $(4,0,0)$. While for $N_S=2$, there is only one $(2,0,0)$.

\subsubsection[\texorpdfstring{tex}{pdfbookmark}]{Reintroducing the coloured contours and $Z$ decorations}
It is now time to reintroduce the coloured contours and the $Z$ decorations at the interfaces of these contours. Unlike in the abstracted colourless case, there six qualitatively different fragmentation schemes, depending of whether the cycle of four single-index edges explore all four contours, three contours, or only two contours. The six schemes are: each $U$ and $U^\dagger$ belong to different contours and the cycle is (1) clockwise or (2) anti-clockwise; both $U$'s ($U^\dagger$'s) belong to the same contour while the $U^\dagger$'s ($U$'s) belong to different contours with different directions of the cycle being (3) and (4) as shown below; both $U$'s belong to the same contour and both $U^\dagger$'s belong to the same contour with the two directions of the cycle being (5) and (6) as shown below.
\begin{figure}[H]
	\centering
	\includegraphics[height = 7cm]{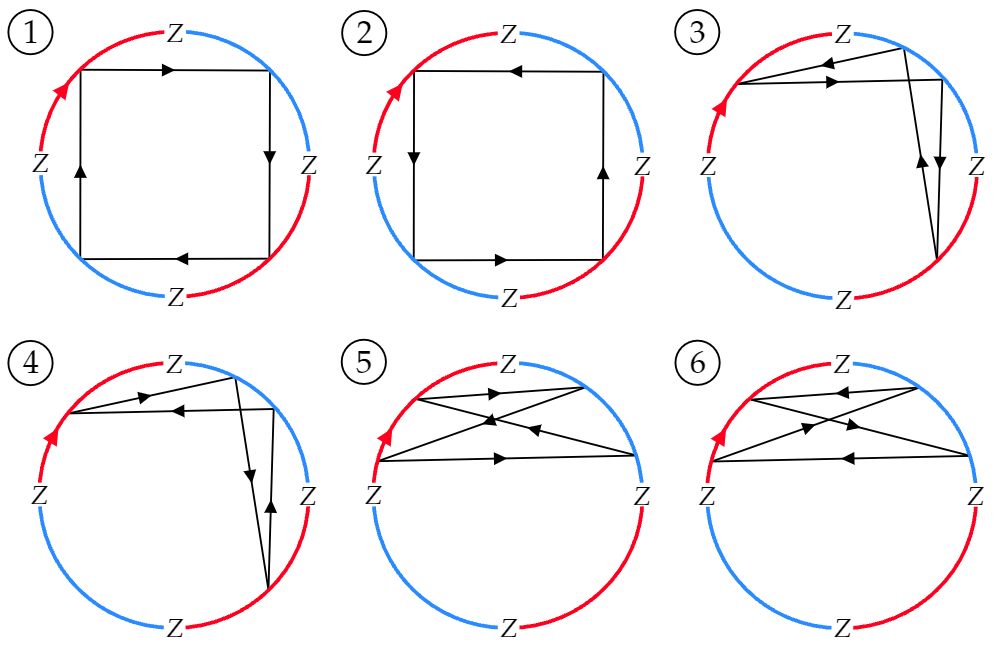}
	\caption{The six qualitatively different configurations of the four-cycle.}\label{4_cycle_list_OTOC_minus}
\end{figure}
All grey edges have been suppressed. Keeping them suppressed and using the rule for single-index edges, the perimeter ring of the diagram fragments into a number of sub-diagrams. An example of the fragmentation scheme (1) is shown below.

\begin{figure}[H]\label{example fragmentation}
	\centering
	\includegraphics[height = 3.4cm]{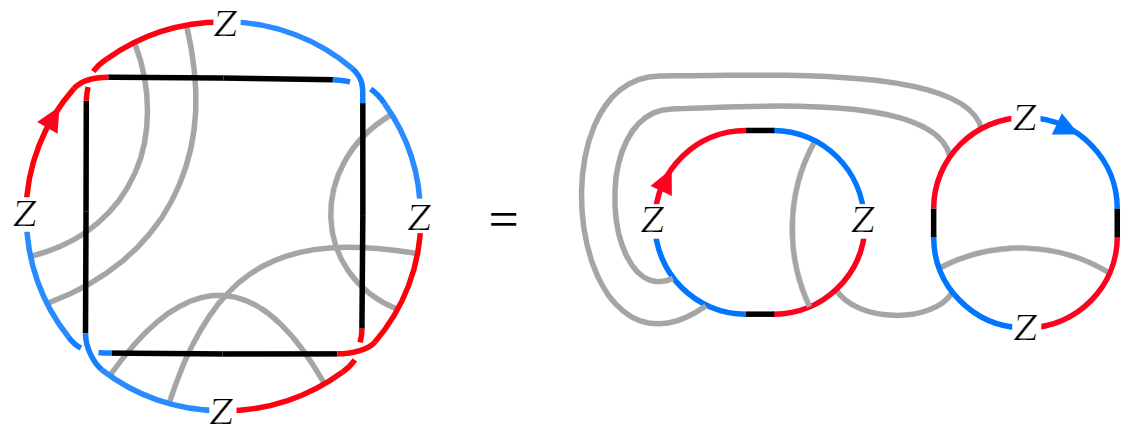}
\end{figure}
We have left a small portion of the single-index edges in the sub-diagrams on the right hand side to aid understanding of the fragmentation. From now on, we will omit this and simply connect together the blue and red contours. Suppressing the doubled-edges, we find for each fragmentation scheme, the following
\begin{figure}[H]
	\centering
	\includegraphics[height = 7cm]{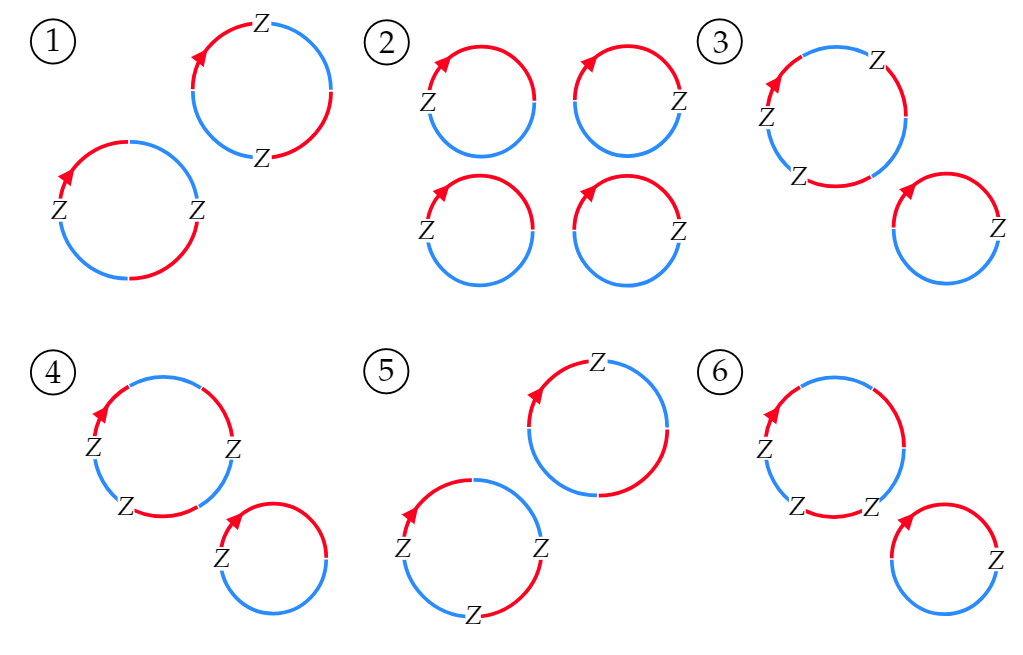}
	\caption{The four-cycle fractures the diagram into one of six qualitatively different fragmentation schemes.\label{4_cycle_list_rule_applied}}
\end{figure}
Cases (1),(3)-(6) all have $N_S=2$ and so the diagrams of interest are those with disconnected planar sub-diagrams. Consider a planar sub-diagram with only one red and one blue contour, the general form of such a diagram is shown below.
\begin{figure}[H]
	\centering
	\includegraphics[height = 2.2cm]{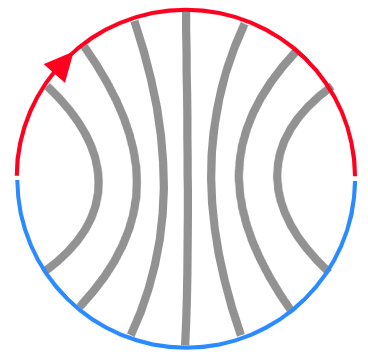}
\end{figure}
If there happens to be a $Z$ decoration at the interface between a red and blue contour, as is the case in cases (3), (4) and (5), then the trace of the $Z$ decoration appears in the evaluation of the diagram.
\begin{figure}[H]\label{once_decorated_planar}
	\centering
	\includegraphics[height = 2.5cm]{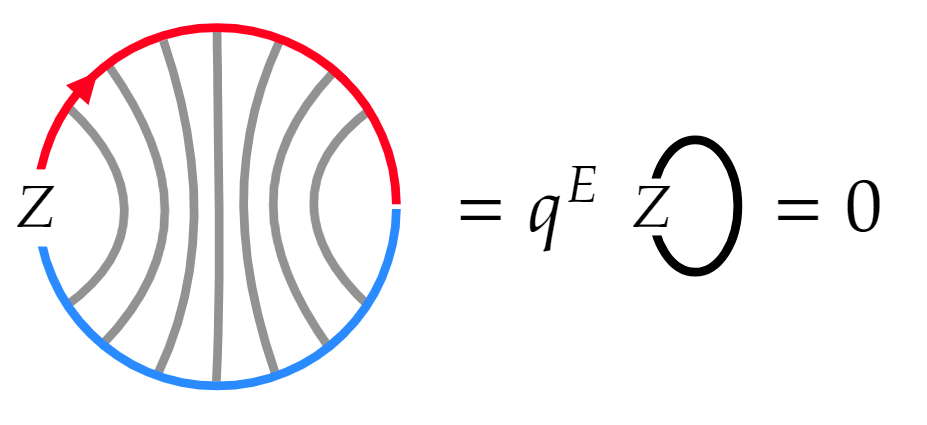}
\end{figure}
Therefore, cases (3), (4) and (6) can be discarded\footnote{If, due to the placement of a four-cycle vertex directly next door to a contour interface, one of these contours had zero length, then the sub-diagram would have only a single coloured contour. Unless this second contour also had zero length, then edges connected to this contour must connect to a different sub-diagram, this we have argued is a sub-leading contribution. If indeed both the red and blue contour were of zero length, then the $Z$ decoration must be traced and hence the contribution is zero. In any case, the conclusion about cases (3), (4) and (6) is the same.}. Planar diagrams with two red and two blue contours have a slightly more complex form.
\begin{figure}[H]
	\centering
	\includegraphics[height = 4cm]{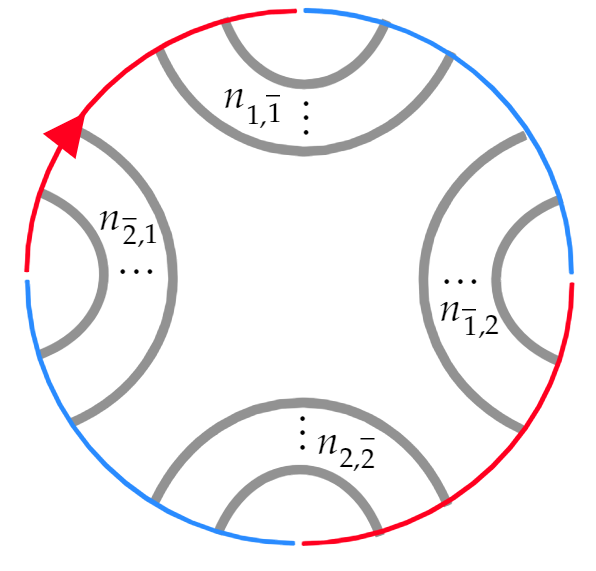}, \quad
	\includegraphics[height = 4cm]{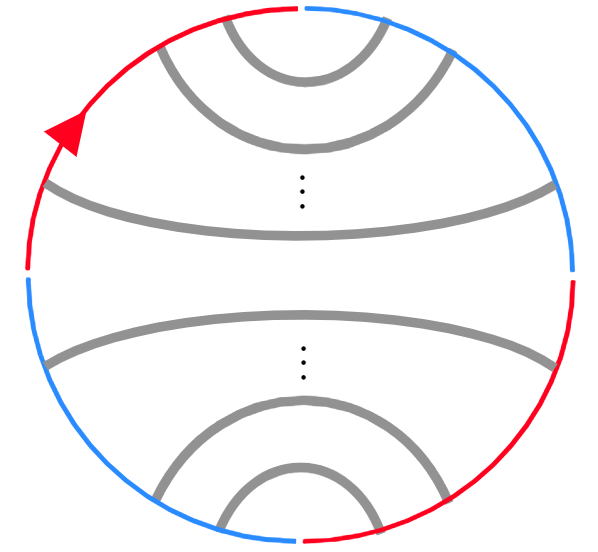}
	\caption{(Left) A general planar diagram for diagrams with two red and two blue contours. (Right) An extremal planar diagram.\label{planar 4 contour}}
\end{figure}
Where we have shown an extremal case on the right. If there happened to be a $Z$ decoration at the interface of a red and blue contour, then every non-extremal planar sub-diagram vanishes. If there happens to be an odd number of $Z$ decoration at the interfaces of red and blue contours, then even the extremal planar diagrams vanish\footnote{Addressing the concern of zero length contours once more:
If the lowest of the single-index edges in (5) of Fig. \ref{4_cycle_list_OTOC_minus} is not exactly horizontal, then each of the sub-diagrams in (5) of Fig. \ref{4_cycle_list_rule_applied} have a different number of $U$ and $U^\dagger$'s and hence cannot be planar. We are only interested in planar sub-diagrams for $N_S=2$. Assuming then that this lowest single-index edge is horizontal and brought immediately above the $Z$'s at the interface, then one of the resulting sub-diagrams in (5) of Fig. \ref{4_cycle_list_rule_applied} has only one red and blue contour and a $Z$ at one of the interfaces. We have seen that such a diagram is zero due to tracing of the $Z$ decoration. Therefore, case (5) still vanishes despite the subtlety of zero length contours.}. Therefore, case (5) can be discarded. This leaves only case (1) and (2).

Next, consider the case (2), where $N_S=4$. We must check both the case of four disconnected planar sub-diagrams, the case of disconnected sub-diagrams with all but one sub-diagram being planar and the case with two planar sub-diagrams and where the remaining two sub-diagrams reduce to the special case. Notice, however, that each sub-diagram has only one red and one blue contour and a single $Z$ decoration at one of the interfaces. Planar diagrams will force the trace to be taken of these $Z$ decorations, rendering them zero. Therefore, we can discard case (2) also\footnote{Should any of the contours be zero length, then that sub-diagram could not possible be planar.}.

We have recently seen the general form of a planar sub-diagram in Fig. \ref{planar 4 contour} with two red and two blue contours. This applies to case (1) and we see that the following extremal planar diagrams manage to pair up the $Z$ decorations onto the same index loop. Non-extremal planar diagrams would take the trace of these $Z$ decorations, as would extremal diagrams with the wrong orientation.
\begin{figure}[H]
	\centering
	\includegraphics[height = 2.5cm]{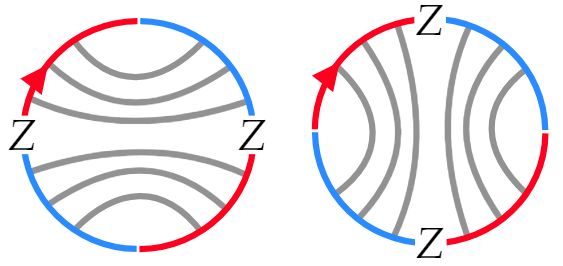}
\end{figure}

Therefore, the only contribution to $\langle \textrm{OTOC} \rangle_{-}$ is case (1) with disconnected extremal planar sub-diagrams. Reversing the fragmentation we find the following equivalent diagram.
\begin{figure}[H]\label{OTOC_minus_contributions}
	\centering
	\includegraphics[height = 5cm]{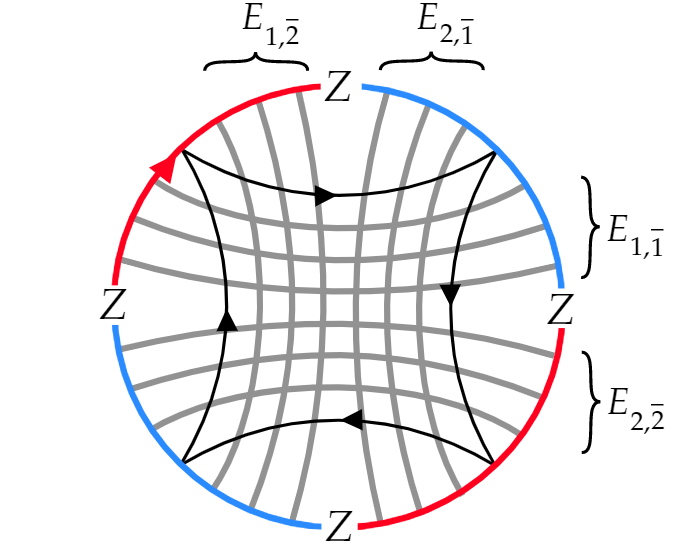}
\end{figure}
We must once again use the fact that the number of vertices in each arc is equal to argue that $E_{1,\overline{1}}=E_{2,\overline{2}}$ and $E_{1,\overline{2}}=E_{2,\overline{1}}$. Folding this diagram back into the OTOC contour gives
\begin{equation}\label{OTOCminus_with_projectors}
	\langle \textrm{OTOC} \rangle_{-} = \sum_{N_-=0}^{T-1} \frac{1}{q^2} \raisebox{-0.63\totalheight}{\includegraphics[height = 2.8cm]{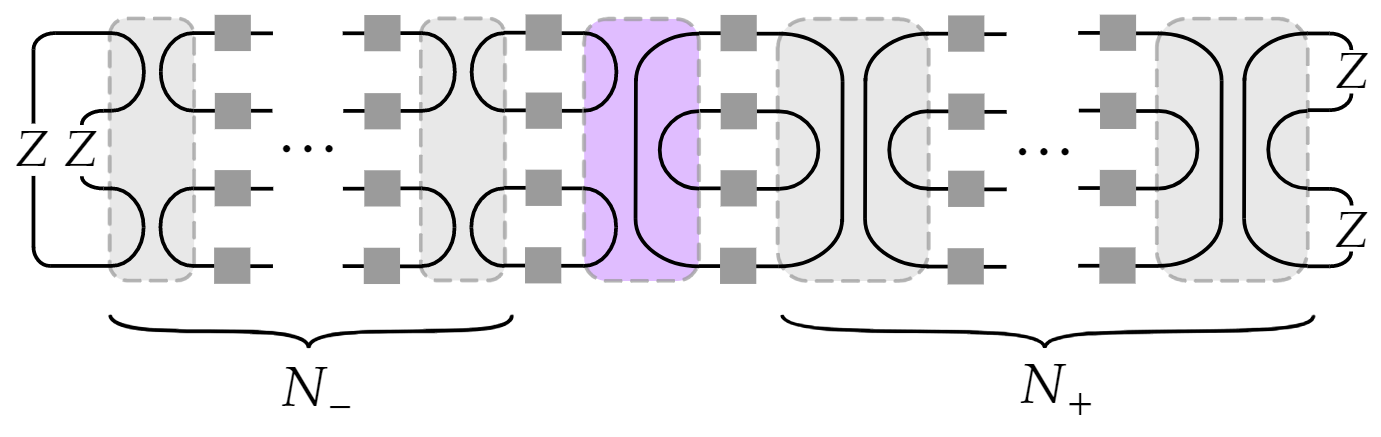}}
\end{equation}
Where $N_{+} + N_{-} +1 = T$. We associated with each closed loop a factor of $1/q$ to keep each loop normalised.

\subsection{Piecing everything together}
Recalling definition Eq. \ref{plusminusdef} and the definition of the layer $\Gamma(t)$ in Fig. \ref{physical OTOC}, $\langle \textrm{OTOC} \rangle_+$ is given by
\begin{equation}
    \langle \textrm{OTOC} \rangle_{+} = \frac{1}{q^2}\sum_{m=1}^{T-1} \left(\prod_{t=1}^{m-1}\bra{-}\Gamma(t)\ket{-}\right)q\bra{-}\Gamma(m)\ket{+}\left(\prod_{t=m+1}^{T-1}\bra{+}\Gamma(t)\ket{+}\right)
\end{equation}
and $\langle \textrm{OTOC} \rangle_-$ is given by
\begin{equation}
    \langle \textrm{OTOC} \rangle_{-} = \frac{1}{q^2}\sum_{m=0}^{T-1} \left(\prod_{t=1}^{m}\bra{-}\Gamma(t)\ket{-}\right)\left(\prod_{t=m+1}^{T-1}\bra{+}\Gamma(t)\ket{+}\right)
\end{equation}
Adding these together gives,
\begin{align}
    \int dU \textrm{OTOC} = \frac{1}{q^2}&\sum_{m=1}^{T-1} \left(\prod_{t=1}^{m-1}\bra{-}\Gamma(t)\ket{-}\right)q\bra{-}\Gamma(m)\ket{0}\left(\prod_{t=m+1}^{T-1}\bra{+}\Gamma(t)\ket{+}\right)\nonumber\\
    &- \frac{1}{q^2}\prod_{t=1}^{T-1}\bra{+}\Gamma(t)\ket{+}.
\end{align}
With a little work, which we will not do here, one can check that this is equivalent, at $\order{1/q^2}$ to
\begin{equation}
    \int dU \textrm{OTOC} = \raisebox{-0.45\totalheight}{\includegraphics[height = 1.8cm]{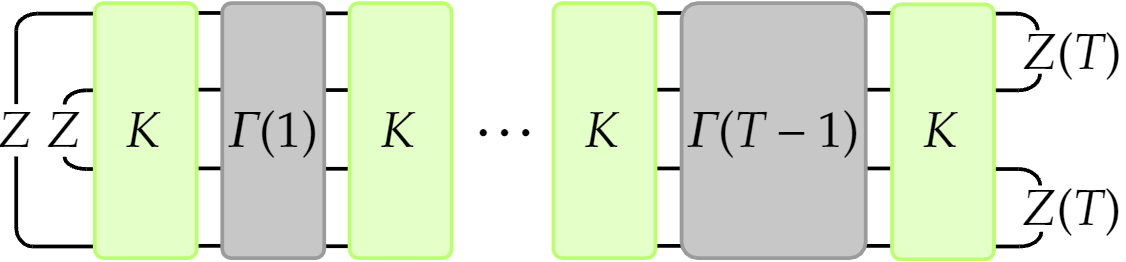}}
\end{equation}
where $K$ is as defined Eq. \ref{Kdef}.
\end{proof}
This result is in fact identical, at $\order{1/q^2}$, to the result obtained when each layer $\Gamma(t)$ is scrambled by independently random unitaries. We conjecture that when each layer shares the same scrambler, correlations between different layers in time are suppressed are exponentially suppressed in the separation distance $\Delta$, $1/q^{\Delta}$.
We did not require any information about the operators content of each layer $\Gamma(t)$ to arrive at this result and therefore, the result holds for any OTOCs of the form $\langle Z A Z(T) B^\dagger Z C Z(T) D^\dagger \rangle$
where $A$, $B$, $C$ and $D$ are time-ordered products of the form $O_1(1)\cdots O_{T-1}(T-1)$ for any normalised, traceless $O_i\in \mathbb{C}^{q\times q}$.

\section{Conclusion}
In this paper, we have investigated a selection of $n$-point correlation functions with random unitary Floquet dynamics, finding the large $q$ scaling behaviour (of the Haar average) of these correlators and their higher moments. For the second moment we have found the precise proportionality constant, given by the degree of cyclic symmetry of the correlator. We then gave an exact expression for the large $q$ behaviour of a class of OTOCs that prove important in circuit averaging problems in random Floquet circuits \cite{MMFpaper}. In doing so, we developed a diagrammatic scheme in which leading order contributions are easily identified and evaluated. Using this methodology, one could generalise this further by relaxing the restrictions on the form of OTOCs considered and by investigating correlators with different time-orderings.

\section{Acknowledgements}
I thank Curt von Keyserlingk for many useful discussions. This work is supported by an EPSRC studentship.

\bibliography{global} 

\begin{thebibliography}{10}

\bibitem{Sekino08}
Y.~Sekino and L.~Susskind, ``Fast scramblers,'' {\em Journal of High Energy
  Physics}, vol.~2008, no.~10, p.~065, 2008.

\bibitem{Lashkari2013}
N.~Lashkari, D.~Stanford, M.~Hastings, T.~Osborne, and P.~Hayden, ``Towards the
  fast scrambling conjecture,'' {\em Journal of High Energy Physics},
  vol.~2013, no.~4, p.~22, 2013.

\bibitem{Shenker2014a}
S.~H. Shenker and D.~Stanford, ``Black holes and the butterfly effect,'' {\em
  Journal of High Energy Physics}, vol.~2014, no.~3, p.~67, 2014.

\bibitem{Shenker2014b}
S.~H. Shenker and D.~Stanford, ``Multiple shocks,'' {\em Journal of High Energy
  Physics}, vol.~2014, no.~12, p.~46, 2014.

\bibitem{Shenker2015}
S.~H. Shenker and D.~Stanford, ``Stringy effects in scrambling,'' {\em Journal
  of High Energy Physics}, vol.~2015, no.~5, p.~132, 2015.

\bibitem{Maldacena2016}
J.~Maldacena, S.~H. Shenker, and D.~Stanford, ``A bound on chaos,'' {\em
  Journal of High Energy Physics}, vol.~2016, no.~8, p.~106, 2016.

\bibitem{Hartman2013}
T.~Hartman and J.~Maldacena, ``Time evolution of entanglement entropy from
  black hole interiors,'' {\em Journal of High Energy Physics}, vol.~2013,
  no.~5, p.~14, 2013.

\bibitem{Liu14a}
H.~Liu and S.~J. Suh, ``Entanglement tsunami: Universal scaling in holographic
  thermalization,'' {\em Phys. Rev. Lett.}, vol.~112, p.~011601, Jan 2014.

\bibitem{Liu14b}
H.~Liu and S.~J. Suh, ``Entanglement growth during thermalization in
  holographic systems,'' {\em Phys. Rev. D}, vol.~89, p.~066012, Mar 2014.

\bibitem{Mezei16}
M.~Mezei and D.~Stanford, ``On entanglement spreading in chaotic systems,''
  {\em Journal of High Energy Physics}, vol.~2017, p.~65, May 2017.

\bibitem{Blake16}
M.~Blake, ``Universal charge diffusion and the butterfly effect in holographic
  theories,'' {\em Physical Review Letters}, vol.~117, Aug 2016.

\bibitem{Stanford2016}
D.~Stanford, ``Many-body chaos at weak coupling,'' {\em Journal of High Energy
  Physics}, vol.~2016, no.~10, p.~9, 2016.

\bibitem{Asplund15}
C.~T. Asplund, A.~Bernamonti, F.~Galli, and T.~Hartman, ``Entanglement
  scrambling in 2d conformal field theory,'' {\em Journal of High Energy
  Physics}, vol.~2015, Sep 2015.

\bibitem{Banerjee2017}
S.~Banerjee and E.~Altman, ``Solvable model for a dynamical quantum phase
  transition from fast to slow scrambling,'' {\em Physical Review B}, vol.~95,
  Apr 2017.

\bibitem{Roberts18}
D.~A. Roberts, D.~Stanford, and A.~Streicher, ``Operator growth in the syk
  model,'' {\em Journal of High Energy Physics}, vol.~2018, Jun 2018.

\bibitem{Roberts16}
D.~A. Roberts and B.~Swingle, ``Lieb-robinson bound and the butterfly effect in
  quantum field theories,'' {\em Phys. Rev. Lett.}, vol.~117, p.~091602, Aug
  2016.

\bibitem{Swingle17}
D.~{Chowdhury} and B.~{Swingle}, ``{Onset of many-body chaos in the $O(N)$
  model},'' {\em ArXiv e-prints}, Mar. 2017.

\bibitem{Aleiner16}
I.~L. Aleiner, L.~Faoro, and L.~B. Ioffe, ``Microscopic model of quantum
  butterfly effect: Out-of-time-order correlators and traveling combustion
  waves,'' {\em Annals of Physics}, vol.~375, pp.~378 -- 406, 2016.

\bibitem{CalabreseCardy05}
P.~Calabrese and J.~Cardy, ``Evolution of entanglement entropy in
  one-dimensional systems,'' {\em Journal of Statistical Mechanics: Theory and
  Experiment}, vol.~2005, no.~04, p.~P04010, 2005.

\bibitem{Nahum16}
A.~Nahum, J.~Ruhman, S.~Vijay, and J.~Haah, ``Quantum entanglement growth under
  random unitary dynamics,'' {\em Phys. Rev. X}, vol.~7, p.~031016, Jul 2017.

\bibitem{Nahum17}
A.~Nahum, S.~Vijay, and J.~Haah, ``Operator spreading in random unitary
  circuits,'' {\em Phys. Rev. X}, vol.~8, p.~021014, Apr 2018.

\bibitem{RvK17}
C.~W. von Keyserlingk, T.~Rakovszky, F.~Pollmann, and S.~L. Sondhi, ``Operator
  hydrodynamics, otocs, and entanglement growth in systems without conservation
  laws,'' {\em Phys. Rev. X}, vol.~8, p.~021013, Apr 2018.

\bibitem{OTOCDiff1}
V.~Khemani, A.~Vishwanath, and D.~A. Huse, ``Operator spreading and the
  emergence of dissipative hydrodynamics under unitary evolution with
  conservation laws,'' {\em Phys. Rev. X}, vol.~8, p.~031057, Sep 2018.

\bibitem{OTOCDiff2}
T.~Rakovszky, F.~Pollmann, and C.~W. von Keyserlingk, ``Diffusive hydrodynamics
  of out-of-time-ordered correlators with charge conservation,'' {\em Phys.
  Rev. X}, vol.~8, p.~031058, Sep 2018.

\bibitem{Brown12}
W.~{Brown} and O.~{Fawzi}, ``{Scrambling speed of random quantum circuits},''
  {\em ArXiv e-prints}, Oct. 2012.

\bibitem{ChanDeLuca1}
A.~Chan, A.~De~Luca, and J.~T. Chalker, ``Solution of a minimal model for
  many-body quantum chaos,'' {\em Phys. Rev. X}, vol.~8, p.~041019, Nov 2018.

\bibitem{Hayden07}
P.~Hayden and J.~Preskill, ``Black holes as mirrors: quantum information in
  random subsystems,'' {\em Journal of High Energy Physics}, vol.~2007, no.~09,
  p.~120, 2007.

\bibitem{Cotler2017a}
J.~Cotler, N.~Hunter-Jones, J.~Liu, and B.~Yoshida, ``Chaos, complexity, and
  random matrices,'' {\em Journal of High Energy Physics}, vol.~2017, Nov 2017.

\bibitem{Gharibyan2018}
H.~Gharibyan, M.~Hanada, S.~H. Shenker, and M.~Tezuka, ``Onset of random matrix
  behavior in scrambling systems,'' {\em Journal of High Energy Physics},
  vol.~2018, Jul 2018.

\bibitem{Meh2004}
M.~L. Mehta, {\em Random Matrices}.
\newblock Elsevier, 3rd~ed., 2004.

\bibitem{Brezin1997}
E.~Brézin and S.~Hikami, ``Spectral form factor in a random matrix theory,''
  {\em Physical Review E}, vol.~55, p.~4067–4083, Apr 1997.

\bibitem{RobertsDesign}
D.~A. Roberts and B.~Yoshida, ``Chaos and complexity by design,'' {\em Journal
  of High Energy Physics}, vol.~2017, p.~121, Apr 2017.

\bibitem{MMFpaper}
E.~R. McCulloch and C.~W. von Keyserlingk, ``Operator spreading in the memory
  matrix formalism.'' [Unpublished].

\bibitem{Collins2002MomentsAC}
B.~Collins, ``Moments and cumulants of polynomial random variables on
  unitarygroups, the itzykson-zuber integral, and free probability,'' {\em
  International Mathematics Research Notices}, vol.~2003, pp.~953--982, 2002.

\bibitem{Collins2006}
B.~Collins and P.~Śniady, ``Integration with respect to the haar measure on
  unitary, orthogonal and symplectic group,'' {\em Communications in
  Mathematical Physics}, vol.~264, p.~773–795, Mar 2006.

\bibitem{asymptoticweingarten}
D.~Weingarten, ``Asymptotic behavior of group integrals in the limit of
  infinite rank,'' {\em Journal of Mathematical Physics}, vol.~19, no.~5,
  pp.~999--1001, 1978.

\bibitem{Diaconis_2001}
P.~Diaconis and S.~N. Evans, ``Linear functionals of eigenvalues of random
  matrices,'' {\em Transactions of the American Mathematical Society},
  vol.~353, no.~7, pp.~2615--2633, 2001.

\end{thebibliography}
\bibliographystyle{ieeetr}

\end{document}